\newcolumntype{g}{>{\columncolor{red}}c}
\let\hat\widehat
\let\tilde\widetilde
\def\asto{{\stackrel{\textrm{a.s.}}{\to}}}
\def \iid {\stackrel{\textnormal{i.i.d.}}{\sim}}
\def \calib {\textnormal{calib}}
\def \train {\textnormal{train}}
\def \test {\textnormal{test}}
\def \fdp {\textrm{FDP}}
\def \fdr {\textrm{FDR}}
\def \dtm {\textrm{dtm}}
\def \cfbh {{\texttt{cfBH}}}
\def \cfbhnull {{\texttt{cfBH0}}}
\definecolor{ForestGreen}{RGB}{34,139,34}
\def\##1\#{\begin{align}#1\end{align}}
\def\$#1\${\begin{align*}#1\end{align*}}
\newcommand{\printfnsymbol}[1]{%
  \textsuperscript{\@fnsymbol{#1}}%
}
\title{Selection by Prediction with Conformal p-values}
\author[1]{Ying Jin} 
\author[1,2]{Emmanuel J. Cand\`es}
\affil[1]{Department of Statistics, Stanford University} 
\affil[2]{Department of Mathematics, Stanford University}
\date{}
\begin{document}

\maketitle

\begin{abstract}
Decision making or scientific discovery pipelines such as 
job hiring and drug discovery often 
involve multiple stages: before any  
resource-intensive step, there is often 
an initial screening that  
uses predictions from a machine learning model to 
shortlist a few candidates from a large pool.  
We study screening procedures that aim to select 
candidates  whose unobserved outcomes 
exceed user-specified values.  
We develop a method 
that wraps around any prediction model 
to produce 
a subset of candidates 
while controlling the proportion of falsely selected units.  
Building upon the conformal inference framework, 
our method first constructs p-values  
that quantify the statistical evidence for large outcomes; 
it then determines the shortlist 
by comparing the p-values to a threshold introduced 
in the multiple testing literature.
In many cases, the procedure selects candidates  
whose predictions are above a data-dependent threshold.  
Our theoretical guarantee holds under mild exchangeability 
conditions on the samples, generalizing existing results 
on multiple conformal p-values. 
We demonstrate the empirical performance of our method 
via simulations, 
and apply it to  job hiring 
and drug discovery datasets.
 
\end{abstract}

\section{Introduction}

Decision making and scientific discovery are resource intensive tasks: 
human evaluation is needed before high-stakes decisions 
such as job hiring~\citep{shen2019deep} and disease diagnosis~\citep{etzioni2003case}; 
several rounds of expensive clinical trials are required before a 
drug can receive FDA approval~\citep{fdadrug}. 
Early on, we often hope to identify viable candidates from a very large pool---consider 
hundreds of applicants to a position or hundreds of thousands of 
potential compounds that may bind to the target. 
In such problems,  
machine learning prediction is useful for an initial screening step to 
shortlist a few candidates; in later, more costly stages, 
only these shortlisted candidates are
carefully investigated to confirm the interesting cases.

This paper concerns scenarios where outcomes 
taking on higher values are of interest. 
Formally, 
suppose we have access to 
a set of training data $\{(X_i,Y_i)\}_{i=1}^n$ 
and a set of test samples $\{X_{n+j}\}_{j=1}^m$ 
whose outcomes $\{Y_{n+j}\}_{j=1}^m$ are unobserved, 
all $(X,Y)\in \cX \times\cY$ pairs 
being i.i.d.~from some arbitrary and unknown 
distribution.\footnote{Later on, we will relax the i.i.d.~assumption to 
exchangeablity conditions.}     
Given some thresholds $\{c_j\}_{j=1}^m$, 
our goal is to find as many test units with $Y_{n+j}>c_j$ 
as possible, while ensuring 
the false discovery rate (FDR), 
the expected proportion of 
errors ($Y_{n+j}\leq c_j$) among 
all shortlisted candidates, is controlled.  
To be specific, letting $\cR\subseteq\{1,\dots,m\}$ be the 
selection set, we define FDR 
as the expectation of the false discovery proportion (FDP), so that 
\#\label{eq:def_fdr}
\fdr = \EE[\fdp],\quad \fdp=    \frac{\sum_{j=1}^m \ind\{j\in \cR, Y_{n+j}\leq c_j\}}{1\vee|\cR|} ,
\#
where we denote $a\vee b=\max\{a,b\}$ for any $a,b\in \RR$, and 
the expectation is over the randomness 
of all training data and all test samples.  
The FDR is a natural measure of type-I error 
for binary classification~\citep{hastie2009elements}. 
For regression problems with a continuous response, 
counting the error 
is reasonable if 
each selected candidate incurs a similar cost.  
We discuss below potential applications 
with binary or quantitative outcomes.   

\paragraph{Candidate screening.}
Companies 
are turning to
machine learning to support 
recruitment decisions~\citep{faliagka2012application,shehu2016adaptive}. 
Predictions 
using automatic resume screening~\citep{amdouni2010web,faliagka2014line}, 
semantic matching~\citep{mochol2007improving} 
or AI-assisted interviews 
are used to screen and select candidates from a large pool. 
Related tasks include talent sourcing, e.g., finding  
people who are likely to search for new opportunities, 
and candidate screening, i.e., 
selecting qualified applicants 
before further human evaluation~\citep{hirearticle}. 
{One may be interested in controlling the FDR~\eqref{eq:def_fdr} 
for resource efficiency: each shortlisted candidate {incurs} similar costs 
such as communication for talent sourcing 
and interviews before the hiring decisions.  
In candidate screening, controlling FDR 
ensures  
most of the costs are devoted to 
evaluating and ranking qualified candidates. 
Job recruitment also has fairness concerns; an alternative goal 
is to ensure that qualified candidates do not get screened out before human evaluation. 
To this end, one can flip the sign of the outcomes,
and~\eqref{eq:def_fdr} represents the proportion 
of qualified candidates in the filtered out ones.}

\paragraph{Drug discovery.}
Machine learning is playing a similar role 
in accelerating the 
drug discovery pipeline. 
Early stages of drug discovery 
aim at finding molecules or compounds---from 
a diverse library~\citep{szymanski2011adaptation} 
developed by institutions across the world~\citep{kim2021pubchem}---with 
strong effects such as high binding affinity to a specific target.   
The activity of drug candidates can be evaluated by  
high-throughput screening (HTS)~\citep{macarron2011impact}. However, the capacity of this approach is quite limited in practice, and 
it is generally infeasible to screen the whole library of readily synthesized compounds. 
Instead, virtual screening~\citep{huang2007drug} by 
machine learning models has enabled the 
automatic search of promising drugs. 
Often, a representative (ideally diverse) subset 
of the whole library is evaluated by HTS; 
machine learning models are then trained 
on these data  
to 
predict other candidates' activity based on their physical, geometric  
and chemical features~\citep{carracedo2021review,koutsoukas2017deep,vamathevan2019applications,dara2021machine} 
and select promising ones for further HTS and/or clinical trials. 
Given the cost of subsequent investigation, 
false positives in this process are a major concern~\citep{sink2010false}. 
Ensuring that a
sufficiently large proportion of resources is devoted to 
promising drugs is thus important for the efficiency of the whole pipeline. \\

\vspace{-0.5em}

In these two examples, the 
FDR quantifies a trade-off between the resources devoted to 
shortlisted candidates (the selection set) and the 
benefits from finding  interesting candidates (the true positives). 
This interpretation is similar to the justification of FDR 
in multiple testing~\citep{benjamini1995controlling,benjamini1997multiple,benjamini2001control}:  
when evaluating a large number of hypotheses, the FDR measures the proportion 
of ``false leads'' for follow-up confirmatory studies. 
However, 
in our \emph{prediction problem},   
the affinity of a new drug is inferred not  from 
the observations, but from other similar compounds, i.e., other drugs in the training data. 
This perspective also 
blurs the distinction between statistical inference and prediction; 
we will draw more connections between these sub-fields later.

The FDR may not necessarily be interpreted as 
a resource-efficiency measure. 
In the next two examples, 
controlling the FDR, which limits the error in inferring the direction of outcomes, 
is relevant to monitoring risk in healthcare 
and counterfactual inference. 

\paragraph{Healthcare.} 
With increasingly available 
patient data, 
machine learning is widely adopted to 
assist human decisions in healthcare.  
For example,  many works use machine learning prediction 
for large-scale early disease diagnosis~\citep{shen2019deep,richens2020improving} 
and patient risk prediction~\citep{rahimi2014risk,corey2018development,jalali2020deep}. 
Calibrating black-box models is important in such high-stakes problems. 
When it is more desired to limit false negatives 
than false positives,   
machine learning prediction might be used to filter out 
low-risk cases, leaving  other cases for careful human evaluation. 
It is sensible to control the proportion of 
high-risk cases among all filtered out samples.

\paragraph{Counterfactual inference.} 
In randomized experiments that run over a period of time, 
inferring whether the patients have benefited from the treatment option 
compared to an alternative might inform decisions such as early stopping of the trial 
for some patients. 
More generally, inferring the benefit of certain patients  
also provides evidence on treatment effect heterogeneity. 
This is a counterfactual inference problem~\citep{lei2021conformal,jin2023sensitivity} 
in which one could predict the counterfactuals, i.e., 
what would happen \emph{should} one takes an alternative option,  
by learning from the outcomes of  patients under that option, 
and then compare the prediction to the realized outcomes. 
In this case, the set of  those declared as having benefited 
from the treatment is informative 
if the FDR is controlled. 
\\

The generic task 
underlying these applications is 
to find a subset of candidates 
whose not-yet-observed outcomes are of interest
(e.g.,~qualification or high binding affinity to the target)  
from a potentially enormous pool of test samples. 
This is often achieved by thresholding their 
test scores---the model prediction on the test samples---from 
models  built on 
a set of training data that are assumed to be from the same distribution. 
However, controlling the error in the selected set 
is a nontrivial task.  

\subsection{Why calibrated predictive inference is insufficient}
\label{subsec:motivate}

We 
consider a binary example to fix ideas, so that $\cY=\{0,1\}$.  
Our goal is to find test samples 
with $Y_{n+j}=1$.
A natural starting point is 
to train a machine learning   model 
that predicts (classifies)
$Y$ given $X$,  
with the hope that test samples with
higher predicted values 
are more promising. 
To achieve valid prediction, 
one could calibrate the model~\citep{vovk2005algorithmic} 
to output a prediction set $\hat{C}_{1-\alpha}(X)$ 
taking the form $\varnothing$, 
$\{0\}$, $\{1\}$ or $\{0,1\}$ 
with the prescription that $\hat{C}_{1-\alpha}(X)$ 
must cover the outcome $Y$ with probability 
at least $1-\alpha$ for some user-specified $\alpha\in(0,1)$. 
The probability is 
averaged over the randomness in the test sample 
and the training process. 

However, a prediction set 
with marginal coverage guarantees  
is insufficient for selection. 
For instance, one 
might consider selecting all test samples $j$ with $\hat{C}_{1-\alpha}( X_{n+j} )=\{1\}$. 
The FDR of the selected set would then be below $\alpha$ 
if $\hat{C}_{1-\alpha}( X_{n+j} )$ covers 
$1$ with probability at least $1-\alpha$ \emph{for the selected units}, that is, 
conditional on selection. 
However, this is clearly a false statement because 
predictive inference only ensures
$(1-\alpha)$ coverage averaged over \emph{all} test samples. 
In fact, no matter how large we set the coverage $(1-\alpha)$, 
such a naive approach might still return a selection set  
that contains too many uninteresting candidates.

It might be best to preview our results on a real-world drug discovery dataset properly introduced and studied later in the paper (Section~\ref{subsubsec:hiv}). In short, the goal is to find promising drug candidates, among thousands of molecules, 
that are active ($Y=1$)
for the HIV target.
This dataset is highly imbalanced in the sense that only $3\%$  
of the drugs are active, as is often the case in studies on drug discovery.
Our main purpose is here to rapidly demonstrate that a straightforward application of conformal prediction methods, selecting those leads with $\hat{C}_{1-\alpha}(X_{n+j})=\{1\}$, 
results in over-confident predictions in a sense described below.

We use a deep learning model (this is introduced in Section~\ref{subsubsec:hiv}) to  construct conformity scores, and ultimately,  conformal prediction sets that are one-sided in the sense that they only take on three possible values: $\varnothing$, $\{1\}$ and $\{0,1\}$; 
 (see  Appendix~\ref{app:subsec_motivate} for
details).
The left panel of Figure~\ref{fig:motivate} shows 
the FDR of the naive approach as a function of the confidence level $1-\alpha\in\{0.99,0.98,\dots,0.70\}$, along 
with the marginal miscoverage of conformal prediction sets and 
the proportion of cases in the test set for which $\hat{C}_{1-\alpha}(X)=\{1\}$.

\begin{figure}[h]
    \centering
    \begin{subfigure}[t]{0.49\linewidth}
        \centering
        \includegraphics[width=0.95\linewidth]{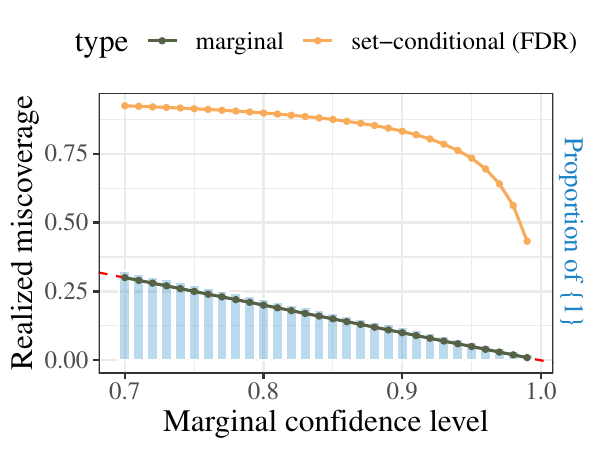}
    \end{subfigure} 
    \begin{subfigure}[t]{0.49\linewidth}
        \centering
        \includegraphics[width=0.95\linewidth]{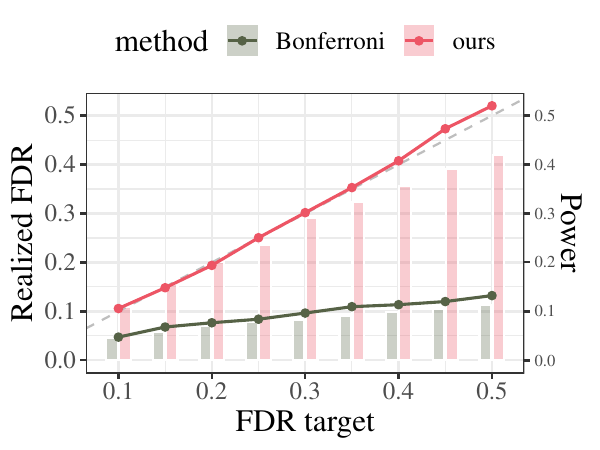}
    \end{subfigure} 
\caption{ 
Left: FDR (set-conditional miscoverage) of the naive approach  and marginal miscoverage as a function of the parameter $\alpha$; the light  blue bars are the  proportion  of cases among all test samples for which $\hat{C}_{1-\alpha}(X)=\{1\}$. Right: FDR (curve) and power (bar) of our selective inference approach and of  Bonferroni's method as a function of the nominal FDR target $q$. {The FDR (resp.~power) is computed  by averaging the FDP (resp.~proportion of true positives) in $N=100$ independent splits of training, calibration, and test data.}}
\label{fig:motivate}
\end{figure}

While conformal prediction always achieves nearly  exact 
marginal validity (brown), it is overconfident for seemingly promising candidates, as the error rate among the selected (FDR)--- those with $\hat{C}_{1-\alpha}(X)=\{1\}$---is very high (orange). When $1-\alpha = 0.90$, we witness an error rate of about 80\%, meaning that 4 out of 5 `discoveries' are false.  Even in 
the extremely conservative case where $1-\alpha=0.99$, the FDR exceeds 35\%. Note that this phenomenon is independent of the target FDR level. We can thus see that the selection issue 
would be especially pressing if, say, we aim for a small FDR level. 
In fact, conformal prediction  outputs a large proportion of uninformative sets: as seen from the light bars, about $1-\alpha$ of the prediction sets are $\hat{C}_{1-\alpha}(X)=\{0,1\}$ 
(we observe no empty prediction sets for this data). 
Thus, it ensures valid marginal coverage even though those $\hat{C}_{1-\alpha}(X)=\{1\}$ seldom 
cover the true label.

To make sure the FDR falls below a user specified tolerance $q\in\{0,1\}$, one might want to employ a Bonferroni correction. To do this we would pick test cases for which $\hat{C}_{1-q/m}(X_{n+j})=\{1\}$, where $m$ is the number of test samples. That is, we apply a Bonferroni correction to the marginal coverage, and this ensures that the probability of 
making a single false selection---which upper bounds the FDR---is below $q$. 
In the right panel of Figure~\ref{fig:motivate}, 
we compare the FDR and power of our approach and Bonferroni's method applied to a range of nominal FDR levels $q\in\{0.11,0.15,\dots,0.3\}$.\footnote{Here we take a subset of $m=1000$, 
as otherwise $q/m$ exceeds the resolution of conformal prediction.} 
Our approach yields almost exact FDR control and much higher power than Bonferroni's.

To ensure calibration on the selected, 
we will bridge conformal inference and selective inference 
and {devise \texttt{cfBH}, an algorithm that turns} 
any prediction model into a screening mechanism. 
In a nutshell, instead of calibrating to 
a fixed confidence level $\alpha$, 
we will use tools from conformal inference 
to quantify the model confidence in outcomes with larger values, 
and then employ multiple testing ideas to 
construct a shortlist of candidates with statistical guarantees.

Returning to the drug discovery application, 
we acknowledge a substantial literature 
using conformal inference for uncertainty quantification in
compound activity prediction, see \cite{lampa2018predicting,eklund2015application,svensson2018conformal,svensson2017improving,lindh2017predicting}, 
and \cite{cortes2019concepts} for a recent review. 
Whether explicitly stated or not, the goal is eventually to select  
or prioritize compounds that progress to 
later stages of drug discovery~\citep{ahlberg2017current,ahlberg2017using}  
after constructing valid prediction intervals. That said, current tools for selection  
are all heuristic, e.g.,  
picking cases with a high predicted value and a relatively short prediction interval. 
As already mentioned, a marginally valid 
prediction set does not necessarily imply reliable selection.  
The  method from this paper fills this gap, and can wrap around the predictions from the literature to  
produce reliable 
selection rules for drug discovery.

\subsection{Hypothesis testing and conformal p-values}

One 
may view our problem as testing the 
\emph{random}
hypotheses 
\#\label{eq:Hj}
H_j \colon Y_{n+j} \leq c_j,\quad j=1,\dots,m.
\#
From now on, we denote $\cH_0 = \{j\colon Y_{n+j}\leq c_j\}$ 
as the set of null hypotheses. 
That is, we define a hypothesis $H_j$ for 
each test sample $j$, and we say $H_j$ is non-null  
if $Y_{n+j}$ exceeds the threshold $c_j$. 
This is perhaps non-classical since 
the hypothesis $H_j$ is random: it 
concerns a random variable 
rather than a model parameter. 
However, we  
show that we can still use p-values  
and rely on multiple hypothesis testing ideas 
to construct the ``rejection'' set $\cR$.  

We start by introducing the tool 
we use to quantify model confidence: \emph{conformal p-values};
as its name suggests, these p-values  
build upon the 
conformal inference framework~\citep{vovk2005algorithmic,vovk1999machine}.  
Suppose we are given any prediction model 
from a training process that is 
independent of the calibration and test samples. 
We condition on the 
training process and view the prediction model as given. 
We first define a nonconformity score $V\colon \cX\times\cY\to \RR$ 
based on the prediction model.  
Intuitively,  $V(x,y)$ measures how well a value $y$ 
\emph{conforms} to the prediction of the model at $x$.  
For example, given a prediction $\hat\mu\colon \cX\to \RR$, 
one could use $V(x,y) = |y-\hat\mu(x)|$; 
other popular choices in the literature 
include 
ideas based on quantile regression~\citep{romano2019conformalized} 
and conditional density estimation~\citep{chernozhukov2021distributional}.  
Should $Y_{n+j}$ be observed, 
one could compute the nonconformity scores
$V_i=V(X_i,Y_i)$ for $i=1,\dots,n$ and $V_{n+j}=V(X_{n+j},Y_{n+j})$. 
The corresponding 
conformal p-value~\citep{vovk2005algorithmic,vovk1999machine,bates2021testing} 
is defined as 
\#\label{eq:pj*}
p_j^* = \frac{\sum_{i=1}^n \ind {\{V_i < {V}_{n+j} \}}+ U_j \cdot ( 1+ \sum_{i=1}^n \ind {\{V_i =  {V}_{n+j} \}})}{n+1}, 
\#
where $U_j\sim$ Unif[0,1] are i.i.d.~random variables 
to break ties. 
If the test sample $(X_{n+j},Y_{n+j})$ follows the same distribution 
as the training data, then $p_j^*\sim \text{Unif}\,[0,1]$.  
However, the mutual dependence among $\{p_j^*\}$  
is complicated 
as they all depend on the same calibration data.  
A recent paper
\citep{bates2021testing} used conformal p-values for 
outlier detection; in their setting,  
observations $\{(X_{n+j},Y_{n+j})\}_{j=1}^m$ are available, and 
the null set $\{j\colon H_j\text{ is true}\}$ 
is deterministic 
since the null hypothesis $H_j$ posits that $(X_{n+j},Y_{n+j})$ 
follows the same distribution as the training  samples. 
In our setting, 
the response $Y_{n+j}$ is not observed. 
This leads us to introduce a different set of conformal p-values.  
Our analysis also 
generalizes~\cite{bates2021testing} to 
exchangeable data.

\subsection{Related work}
\label{subsec:related_work}

This work concerns
calibrating prediction models  
to obtain correct directional conclusions on the outcomes.   
In situations where one cares more about the mistakes 
on the selected subset, our error notion, the FDR, might 
be more relevant than 
average prediction errors. 
That said, several works 
have studied FDR 
control in prediction problems, especially in 
binary classification. 
Among them,~\cite{dickhaus2014multiple} 
connects classification to multiple testing, 
showing that controlling type-I error (FDR) at certain levels 
by thresholding an oracle classifier 
asymptotically achieves the optimal (Bayes) classification risk;~\cite{scott2009false} 
provides
high-probability bounds for estimating the 
FDR achieved by classification rules, rather than adaptively 
controlling it at a specific level. 

Our problem setup is close to 
several recent works on calibrated screening 
or thresholding~\citep{wang2022improving,sahoo2021reliable} 
in classification or regression problems. 
These works however focus on different targets;~\cite{wang2022improving} 
focuses on selecting a subset with a prescribed expected number 
of qualified candidates;~\cite{sahoo2021reliable} focuses 
on the calibration of the predicted score itself 
to achieve a similar notion of error control as ours, but  
at varying levels for all thresholds. 
The difference is that our method rigorously controls FDR 
in finite samples, while it might be difficult to obtain 
such guarantees for 
the targets in~\cite{wang2022improving,sahoo2021reliable}.

Our methods build upon the 
conformal inference framework~\citep{vovk2005algorithmic,vovk1999machine}. 
Although 
conformal-inference-based methods  
have been developed for 
reliable uncertainty quantification 
in various problems~\citep{lei2021conformal,candes2021conformalized,jin2023sensitivity,tibshirani2019conformal}, 
the theoretical guarantee usually concerns 
a single test point. 
However, in many applications, 
one might be 
interested in a batch of individuals 
and desire uncertainty quantification 
for multiple test samples simultaneously; 
in such situations, these methods are insufficient 
due to the complex dependence structure of 
test scores and p-values as well as multiplicity issues.

This work is closely 
related to~\cite{bates2021testing},  
in which the authors use  conformal p-values~\eqref{eq:pj*} 
to test for multiple outliers. 
Our conformal p-values differ from theirs 
as the outcomes are not observed. 
A few works~\citep{mary2021semi,roquain2022false} 
are parallel to~\cite{bates2021testing}, studying 
multiple testing in a setting 
where all null hypotheses specify an identical null distribution; 
they are further generalized by~\cite{rava2021burden} to achieve 
subgroup FDR control in classification. 
Our method is similar to this line of work 
in constructing a threshold for certain ``scores''  
and selecting candidates with scores 
above that threshold. 
However, we work with random hypotheses and propose distinct procedures,  
whereas in their works, 
the hypotheses are deterministic (or conditioned on).  
We will discuss these distinctions in more detail  
as we present our results.  

Our perspective on the problem 
is also generally related to the multiple 
hypothesis testing literature where the FDR is a popular notion 
of type-I error.  
Since we pay more attention to one particular direction 
(e.g., we are interested in finding those $Y_{n+j}>c_j$), 
our work is related to  
testing the signs 
of  statistical parameters~\citep{bohrer1979multiple,bohrer1980optimal,hochberg1986multiple,guo2010controlling,weinstein2020online}.  
Our framework differs from 
the existing directional testing literature in important ways. 
Firstly, we test for 
the direction of a random outcome instead of a model parameter. 
This leads to random null sets, whose dependence structure is complicated. 
Secondly, 
our inference relies on exchangeability of the data 
while imposing no assumption on their distribution;  
this differs from 
standard practice, in which a null hypothesis 
specifies the distribution of test statistics 
and leads to a uniform p-value.

\section{Methodology}

\subsection{Selection by prediction with conformal p-values}
\label{subsec:basic}

We construct new conformal p-values 
to test the random hypotheses~\eqref{eq:Hj} regarding $Y_{n+j}$, 
building on an arbitrary 
nonconformity score $V$ obeying the
\emph{monotone} property.

\begin{definition}
A nonconformity score $V(\cdot,\cdot)\colon \cX\times\cY\to \RR$ is monotone if 
$V(x,y)\leq V(x,y')$ holds 
for any $x\in \cX$ and any $y, y'\in \cY$ obeying $y\leq y'$. 
\end{definition}

In general, $V(x,y)$ should measure how extreme the value $y$ 
is compared to the normal behavior of the outcome for 
a value $x$ of the covariate. 
For example, one could let $V(x,y) = y - \hat\mu(x)$ 
if the prediction machine outputs 
some estimate  $\hat\mu(x)$ of the conditional mean function 
or conditional quantile function. 
One could also set $V(x,y) $ as an estimator for $F(x,y):=\PP(Y\leq y\given X=x)$  
that obeys monotonicity~\citep{chernozhukov2021distributional}.   

The choice of $V$ may also account for 
the form of $\cR$:   
given a \emph{monotone} nonconformity score $V$,  
our method includes 
case $j$ in $\cR$ if $V(X_{n+j},c_j)$ is sufficiently small.  
As warm-up, 
consider binary classification. 
To find samples with $Y=1$, one could set $c_j\equiv 0.5$. 
Suppose $\hat\mu(\cdot)$ is some prediction from 
a machine learning algorithm; 
for instance, 
one could think of $\hat\mu(x)$ as an estimate for $\PP(Y=1\given X=x)$ 
obtained by means of a neural network.   
If one would like to select individuals 
with larger fitted probability $\hat\mu(X_{n+j})$, 
then $V(x,y)$ can be chosen in a way
such that $V(x,c_j)$ is decreasing in $\hat\mu(x)$. 
One such choice is 
$V(x,y) = y-\hat\mu(x)$.  
This reasoning also applies to continuous responses 
with regression modeling. 

We first compute $V_i = V(X_i,Y_i)$ for $i\in \cD_\calib=\{1,\dots,n\}$ 
and $\hat{V}_{n+j} = V(X_{n+j},c_j)$ for $j=1,\dots,m$; here  
we use the notation  $\hat{V}_{n+j}$ to distinguish from 
the unobserved scores $V_{n+j}:=V(X_{n+j},Y_{n+j})$. 
Then for each $j=1,\dots,m$, we construct the conformal p-values 
\#\label{eq:pj}
p_j = \frac{\sum_{i=1}^n \ind {\{V_i <\hat{V}_{n+j} \}}+  ( 1+ \sum_{i=1}^n \ind {\{V_i = \hat{V}_{n+j} \} } ) \cdot U_j}{n+1},
\#
where $U_j\sim \textrm{Unif}(0,1)$ are i.i.d.~random variables 
to break ties. 

\begin{remark} \normalfont Our conformal p-values have an intuitive
    interpretation: $p_j$ is the smallest significance level such that
    a one-sided conformal prediction interval for $Y_{n+j}$ excludes
    $c_j$.  Indeed, the split conformal inference
    procedure~\citep{lei2018distribution} (using $-V$ as the
    nonconformity score) yields the one-sided prediction interval
    $\hat{C}(X_{n+j},1-\alpha)=[\eta(X_{n+j},1-\alpha),+\infty)$ for
    $Y_{n+j}$, where \$ -\eta(X_{n+j},1-\alpha) =
    \textrm{Quantile}\big( {\textstyle 1-\alpha;
      \frac{1}{n+1}\sum_{i=1}^n \delta_{-V_i} +
      \frac{1}{n+1}\delta_{-\infty} } \big).  \$ It is guaranteed that
    $\PP(Y_{n+j}\in \hat{C}(X_{n+j},1-\alpha))\geq 1-\alpha$ where the
    expectation is taken over the randomness in $\cD_\calib$ and
    $\{X_{n+j},Y_{n+j}\}$.  Thus, ignoring the tie-breaking $U_j$, we see that the
    conformal p-value $p_j$  
    is the smallest $\alpha$ such that
    $c_j < \eta(X_{n+j},1-\alpha)$. Put it another way, we have
    confidence of at least $1-p_j$ that $Y_{n+j}>c_j$.
\end{remark}

The difference between $\{p_j\}$ in~\eqref{eq:pj} 
and $\{p_j^*\}$ in~\eqref{eq:pj*} is whether 
we use $\hat{V}_{n+j}$ or $V_{n+j}$ to contruct the p-values. 
To distinguish, we call $\{p_j^*\}$ the \emph{oracle} 
conformal p-values hereafter, 
which 
are not observable in our setting. 
In~\cite{bates2021testing}, 
$p_j^*$ quantifies how 
extreme a score is and is used to test 
whether the test sample $j$ is an outlier. 
In our context, $p_j$ quantifies how 
extreme the threshold is compared to 
the usual behavior of the outcomes. 




We then run the Benjamini-Hochberg (BH)
procedure~\citep{benjamini1995controlling,benjamini2001control} 
with the conformal p-values $\{p_j\}$. 
The whole 
procedure for \texttt{cfBH} is summarized in Algorithm~\ref{alg:bh}.

\begin{algorithm}[h]
  \caption{\texttt{cfBH}: Selection by prediction with conformal p-values}\label{alg:bh}
  \begin{algorithmic}[1]
  \REQUIRE Calibration data $\{(X_i,Y_i)\}_{i=1}^n$, 
  test data covariates $\{X_{n+j}\}_{j=1}^m$, 
  thresholds $\{c_j\}_{j=1}^m$, 
  FDR target $q\in(0,1)$, monotone nonconformity score $V\colon \cX\times\cY\to \RR$. 
  \vspace{0.05in} 
  \STATE Compute $V_i = V(X_i,Y_i)$ for $i=1,\dots,n$,  
  and $\hat{V}_{n+j}= V(X_{n+j},c_j)$ for $j=1,\dots,m$.
  \STATE Construct conformal p-values $\{p_j\}_{j=1}^m$ as in~\eqref{eq:pj}. 
  \STATE (BH procedure) Compute $k^* = \max\big\{k\colon \sum_{j=1}^m \ind\{p_j\leq qk/m\} \geq k\big\}$.
  \vspace{0.05in}
  \ENSURE Selection set $\cR = \{j\colon p_j \leq qk^*/m\}$.
  \end{algorithmic}
\end{algorithm}

\subsection{Finite sample FDR control}

Throughout, we define 
the random  vector $Z_i=(X_i,Y_i)$ for $1\leq j\leq n+m$, 
and $\tilde{Z}_{n+j} = (X_{n+j},c_j)$ for $1\leq j \leq m$. 
The following theorem 
establishes generic conditions under which 
\texttt{cfBH} controls  
the FDR~\eqref{eq:def_fdr} using $\{p_j\}$ in~\eqref{eq:pj}
with i.i.d.~calibration and test samples.

\begin{theorem}\label{thm:exchangeable}
Suppose $V$ is monotone, the calibration data $\{ Z_i \}_{i=1}^n$ and 
test data $\{ Z_{n+j} \}_{j=1}^m$ are i.i.d., 
and data in 
$\{Z_i\}_{i=1}^n \cup \{\tilde{Z}_{n+\ell}\}_{\ell\neq j} \cup \{Z_{n+j}\}$  
are mutually independent for any $j$. 
Then, for any $q\in (0,1)$, 
the output $\cR$ of Algorithm~\ref{alg:bh} satisfies 
$\mathrm{FDR}\leq q$. 
\end{theorem}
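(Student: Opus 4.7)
The plan is to reduce to the unobservable \emph{oracle} p-values $\{p_j^*\}$ from \eqref{eq:pj*} and then carry out a leave-one-out exchangeability argument in the spirit of the one used by Bates et al.\ for multiple outlier testing, with two adaptations needed for our setting: our p-values use $\hat V_{n+j}=V(X_{n+j},c_j)$ rather than the true $V_{n+j}=V(X_{n+j},Y_{n+j})$, and the null set $\cH_0=\{j:Y_{n+j}\leq c_j\}$ is random.

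The first ingredient is a short monotonicity lemma. Because $V$ is monotone, on $\{j\in\cH_0\}$ we have $V_{n+j}\leq \hat V_{n+j}$; inspecting the formulas \eqref{eq:pj} and \eqref{eq:pj*} and handling the shared tie-breaker $U_j$ by a brief case analysis then yields $p_j\geq p_j^*$ on this event. Setting $R=|\cR|$, this gives
\[
\fdr \;=\; \sum_{j=1}^m \EE\!\left[\frac{\ind\{p_j\leq qR/m\}\,\ind\{j\in\cH_0\}}{R\vee 1}\right] \;\leq\; \sum_{j=1}^m \EE\!\left[\frac{\ind\{p_j^*\leq qR/m\}\,\ind\{j\in\cH_0\}}{R\vee 1}\right],
\]
so it suffices to show that each right-hand summand is at most $q/m$; summing then gives $\fdr\leq q|\cH_0|/m\leq q$.

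For a fixed $j$, I would condition on the $\sigma$-algebra $\mathcal{G}_j$ generated by $\{\tilde Z_{n+\ell}\}_{\ell\ne j}$ together with $\{U_\ell\}_{\ell\ne j}$. The independence hypothesis of the theorem is precisely what makes $\mathcal{G}_j$ independent of the collection $Z_1,\dots,Z_n,Z_{n+j}$, so within the conditional probability these $n+1$ vectors remain i.i.d.\ and hence exchangeable. Swapping $Z_{n+j}$ in turn with each $Z_i$ therefore produces $n+1$ equidistributed configurations; equivalently, conditioning further on the unordered multiset $\{V_1,\dots,V_n,V_{n+j}\}$, the position of $V_{n+j}$ inside the multiset is uniform on $\{1,\dots,n+1\}$, which is the standard mechanism rendering $p_j^*$ conditionally super-uniform.

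The main obstacle will be that $R$ is \emph{not} measurable with respect to this $\sigma$-algebra: each swap alters the calibration multiset by one element and thereby shifts every $p_\ell$ for $\ell\ne j$. I expect to handle this by combining the step-up characterization of BH with a monotone-coupling argument, so that the quantity $\ind\{p_j^*\leq qR/m\}/(R\vee 1)$ becomes amenable to a telescoping-sum bound in the spirit of the Benjamini--Yekutieli PRDS proof. Averaging the resulting identity over the uniform rank then delivers the per-$j$ bound $q/m$ and closes the argument.
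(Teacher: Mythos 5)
Your first half matches the paper exactly: the monotonicity lemma giving $p_j^*\leq p_j$ on $\{j\in\cH_0\}$, the pointwise replacement of $\ind\{p_j\leq qR/m\}$ by $\ind\{p_j^*\leq qR/m\}$, and the reduction to a per-$j$ bound of $q/m$ are all steps the paper takes (Lemma~\ref{lem:key} and the opening of the proof in Appendix~\ref{app:subsec_bh}). The gap is in the second half, and you have in fact put your finger on it yourself: after the substitution, the rejection count $R$ is still computed from the original vector $(p_1,\dots,p_m)$, which depends on the calibration scores in a way that is not measurable with respect to your $\mathcal{G}_j$ and is not symmetric under swapping $Z_{n+j}$ with a calibration point (each $p_\ell$, $\ell\neq j$, compares $\hat V_{n+\ell}$ to the multiset $[V_1,\dots,V_n]$ \emph{without} $V_{n+j}$, so every swap perturbs it). Saying you ``expect to handle this by combining the step-up characterization of BH with a monotone-coupling argument'' is naming the problem, not solving it; all of the technical content of the theorem lives precisely there, so as written the proof is incomplete.

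For comparison, the paper resolves this in two distinct ways. For Theorem~\ref{thm:exchangeable} it first uses the step-up property to replace $\cR$ by $\cR_{j\to *}$ (the BH output with $p_j$ swapped for $p_j^*$), which is legitimate because on the event $\{p_j\leq q|\cR|/m\}$ the index $j$ is rejected and lowering a rejected p-value leaves the BH rejection set unchanged; it then invokes the PRDS property of the vector $(p_1,\dots,p_{j-1},p_j^*,p_{j+1},\dots,p_m)$ on $p_j^*$ --- obtained from the Bates et al.\ result by relabeling $\tilde Z_{n+\ell}=(X_{n+\ell},c_\ell)$ as ``outlier'' test points and $Z_{n+j}$ as an inlier, which is exactly where the mutual-independence hypothesis enters --- together with the standard Benjamini--Yekutieli bound. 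Your exchangeability-plus-unordered-multiset route is closer to the paper's proof of Theorem~\ref{thm:determ}, but making it work requires an additional construction you have not supplied: modified leave-one-out p-values $p_\ell^{(j)}$ that include the extra comparison $\ind\{V_{n+j}<\hat V_{n+\ell}\}$ so that they become symmetric functions of all $n+1$ scores, an argument that the rejection set is unchanged under this modification on the event $\{Y_{n+j}\leq c_j,\ j\in\cR\}$, and a final pass to $\cR^*_{j\to 0}$ so that the rejection set is measurable with respect to the conditioning information while $p_j^*$ remains conditionally super-uniform. Either of these completions would close your argument; without one of them the per-$j$ bound $q/m$ does not follow.
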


{Our framework applies to scenarios where $c_j$ are random variables 
(see Examples~\ref{ex:random},~\ref{ex:missing} and~\ref{ex:counterfactual} in the next subsection). 
In this case, all data being i.i.d.~does not necessarily imply the mutual independence of 
data in $\{Z_i\}_{i=1}^n \cup \{\tilde{Z}_{n+\ell}\}_{\ell\neq j} \cup \{Z_{n+j}\}$.  }

We note some conceptual novelty regarding  
our p-values in \texttt{cfBH}. 
In  conventional statistical inference,  
a null hypothesis (approximately) specifies the distribution 
of a test statistic, e.g., a p-value dominating Unif\,$[0,1]$. 
In sharp contrast, 
p-values defined in~\eqref{eq:pj} 
do not satisfy such a property, i.e., 
it does not necessarily hold that 
$\PP(p_j\leq \alpha \given j\in \cH_0)\leq \alpha$ for $\alpha\in(0,1)$. 
This is because $p_j$ and the random hypothesis $H_j$ are 
dependent in an unknown fashion.\footnote{{To see this, consider a special 
case where $V(x,y)=y-x$ 
for $Y=-X+\epsilon$ with $(X,\epsilon)\iid N(0,1)$, and $c=0$. 
With sufficiently many calibration data ($n\to \infty$), one can show that 
$p_j =\Phi(-X_{n+j})$ where $\Phi$ is the c.d.f.~of standard normal distribution. 
One can check that in this case, $\PP(p_j\leq 0.05\given Y\leq 0)>0.09$.}}
Instead, they obey
a generalized notion which states that 
\#\label{eq:general_typeI}
\PP(p_j \leq \alpha  \text{ and } j\in \cH_0 ) \leq \alpha,\quad \text{for all } \alpha\in[0,1].
\#
That is, for testing one single hypothesis $H_j$ at level $\alpha$, 
rejecting $p_j\leq\alpha$ yields the control 
of the generalized error~\eqref{eq:general_typeI} 
that accounts for the randomness in $H_j$ as well. 
This might connect to the Bayesian perspective where 
parameters are themselves random variables.

We outline some important properties 
of our p-values to develop intuitions regarding the FDR control of \cfbh. 
It is proved in
\cite{bates2021testing} that 
the oracle p-values 
$\{p_j^*\}$ satisfy a specific dependence structure called 
\emph{positive regression dependent on a subset} (PRDS)~\citep{benjamini2001control}.

\begin{definition}[PRDS]
  \label{def:prds}
  A random vector $X=(X_1,\dots,X_m)$ is PRDS 
  on a subset $\cI$ if for any $i\in \cI$ 
  and any increasing set $D$, 
  the probability $\PP(X\in D\given X_i=x)$ is increasing in $x$. 
\end{definition}
Here a set $D\subseteq \RR^m$ is 
\emph{increasing} if $a\in D$ and $b\succeq a$ implies $b\in D$, 
where $\succeq$ denote coordinate-wise inequality. 
To be specific,~\cite{bates2021testing} proved that 
the random vector of oracle conformal p-values $(p_1^*,\dots,p_m^*)$ is PRDS 
on  the index set $\cI$ consisting of 
all cases $(X_{n+j},Y_{n+j})$ that follow the same distribution as the training data. 
Relying on this result, 
we show that our p-values are PRDS after modifying one coordinate. 
The following lemma 
shows {the key properties for proving} Theorem~\ref{thm:exchangeable}. 
\begin{lemma}\label{lem:key}
  (i) If $V$ is monotone, then  
  $p_j \geq p_j^*$ on the event $\{j\in \cH_0\}$ for each $j$. 
  (ii) If the calibration and 
  test samples are i.i.d., then  $p_j^*\sim \textrm{Unif}\,([0,1])$. 
  (iii) If data in $\{Z_i\}_{i=1}^n \cup \{\tilde{Z}_{n+\ell}\}_{\ell\neq j} \cup \{Z_{n+j}\}$  
are independent,  
then $(p_1,\dots,p_{j-1},p_j^*,p_{j+1},\dots,p_{m})$ is PRDS on $p_j^*$.
\end{lemma}

The first  {two properties} in Lemma~\ref{lem:key} 
mean that $p_j$ is more conservative 
than $p_j^*\sim\text{Unif}\,[0,1]$ on the null event, hence 
leading to~\eqref{eq:general_typeI}. 
Generalizing to multiple hypotheses testing,   
one could expect that the false discoveries from 
$\{p_j\}$ 
can be controlled with those using $\{p_j^*\}$; 
the latter is studied in~\cite{bates2021testing} 
and works well with the Benjamini-Hochberg (BH)
procedure~\citep{benjamini1995controlling,benjamini2001control} 
because of the PRDS property (c.f.~Definition~\ref{def:prds}). 
These heuristics are made concrete by 
a careful leave-one-out 
analysis for FDR control, 
as well as the general 
PRDS property of conformal p-values in (iii) of Lemma~\ref{lem:key}. 
We defer the detailed  proofs to Appendix~\ref{app:subsec_bh}.

\subsection{Extension to exchangeable data}

Our previous result extends to the situation where the calibration and test samples obey a natural exchangeability condition. 
\begin{theorem}\label{thm:determ}
  Suppose $V$ is monotone, and that for any $j=1,\dots,m$, 
  the random variables $\{V_1,\dots,V_n, V_{n+j}\}$ are exchangeable 
  conditional on $\{\hat{V}_{n+\ell}\}_{\ell\neq j}$. 
  Also,  the random variables $\{V_i\}_{i=1}^n\cup\{\hat{V}_{n+\ell}\}_{\ell=1}^m$ 
  have no ties almost surely. 
  Then $\cfbh$ applied to p-values defined as 
  \#\label{eq:pj_determ}
  p_j^{\textnormal{\dtm}} = \frac{1 + \sum_{i=1}^n \ind {\{V_i <\hat{V}_{n+j} \}} }{n+1},
  \#  
  satisfies
  $\textnormal{FDR}\leq q$.
  \end{theorem}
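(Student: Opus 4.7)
The plan is to adapt the proof of Theorem~\ref{thm:exchangeable} to the weaker exchangeability setting by working conditionally on $\mathcal{G}_j := \sigma(\{\hat V_{n+\ell}\}_{\ell \neq j})$ for each fixed $j$. Since $V$ is monotone, on the null event $\{Y_{n+j}\leq c_j\}$ we have $V_{n+j}\leq \hat V_{n+j}$, hence $p_j^{\dtm}\geq p_j^{*,\dtm}$, where I introduce the deterministic oracle p-value $p_j^{*,\dtm}:= (1+\sum_{i=1}^n \ind\{V_i< V_{n+j}\})/(n+1)$. The conditional exchangeability of $\{V_1,\dots,V_n,V_{n+j}\}$ given $\mathcal{G}_j$, combined with the no-ties assumption, makes the rank of $V_{n+j}$ uniform on $\{1,\dots,n+1\}$ given $\mathcal{G}_j$, so $p_j^{*,\dtm}$ is (conditionally and hence unconditionally) super-uniform on $[0,1]$.

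The key technical step is to establish a conditional PRDS property for the modified vector $\tilde p := (p_1^{\dtm},\dots,p_{j-1}^{\dtm}, p_j^{*,\dtm}, p_{j+1}^{\dtm},\dots,p_m^{\dtm})$, analogous to Lemma~\ref{lem:key}. I condition further on the unordered multiset $\mathcal{S} := \{V_1,\dots,V_n,V_{n+j}\}$: under this conditioning, $\{V_1,\dots,V_n\} = \mathcal{S}\setminus\{V_{n+j}\}$ is determined by $V_{n+j}$, and varying $V_{n+j}$ from $v$ to $v'>v$ changes, for each $\ell\neq j$, the count $\sum_i \ind\{V_i< \hat V_{n+\ell}\}$ by $\ind\{v< \hat V_{n+\ell}\} - \ind\{v'< \hat V_{n+\ell}\}\geq 0$. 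Together with the obvious monotonicity of $p_j^{*,\dtm}$ in $V_{n+j}$, this shows $\tilde p$ is a coordinate-wise non-decreasing function of $V_{n+j}$ given $(\mathcal{G}_j,\mathcal{S})$, which readily implies PRDS on the $j$-th coordinate.

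Finally, I carry out the BH+PRDS analysis conditionally on $\mathcal{G}_j$. Writing $\fdr = \sum_j \EE[\ind\{j\in\cH_0,\, j\in\cR\}/(|\cR|\vee 1)]$, I bound each summand using the null-event inequality $\ind\{p_j^{\dtm}\leq q|\cR|/m\}\leq \ind\{p_j^{*,\dtm}\leq q|\cR|/m\}$, and then invoke the standard BH+PRDS argument of~\cite{benjamini2001control} applied to $\tilde p$: super-uniformity of $p_j^{*,\dtm}$ combined with the PRDS property of $\tilde p$ on its $j$-th coordinate yields a bound of $q/m$ per summand, conditional on $\mathcal{G}_j$. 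Summing over $j$ gives $\fdr\leq q$. I expect the main obstacle to be this last step: replacing $p_j^{\dtm}$ by the smaller $p_j^{*,\dtm}$ inside the BH iteration can alter the realized cutoff $k^*$, so the leave-one-out argument from Theorem~\ref{thm:exchangeable} must be transposed to the conditional setting to ensure that the BH+PRDS bound applies to the original selection set's cardinality in the denominator, rather than to the modified one.
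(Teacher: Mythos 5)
Your proposal is correct in outline and reaches the theorem, but it takes a genuinely different route from the paper at the key technical step. The paper's proof avoids PRDS altogether: for each $j$ it introduces \emph{augmented} leave-one-out p-values $p_\ell^{(j)} = \bigl(\sum_{i=1}^n \ind\{V_i<\hat V_{n+\ell}\} + \ind\{V_{n+j}<\hat V_{n+\ell}\}\bigr)/(n+1)$, shows via a case analysis (using monotonicity and the no-ties assumption) that on the event $\{Y_{n+j}\leq c_j,\ j\in\cR\}$ the BH output is unchanged when $(p_1^{\dtm},\dots,p_m^{\dtm})$ is replaced by $(p_1^{(j)},\dots,p_j^*,\dots,p_m^{(j)})$, and then exploits the fact that the $p_\ell^{(j)}$ are \emph{permutation-invariant} functions of the multiset $[V_1,\dots,V_n,V_{n+j}]$; hence, after sending $p_j^*$ to zero, the resulting rejection set is conditionally \emph{independent} of $p_j^*$ given the multiset and $\{\hat V_{n+\ell}\}_{\ell\neq j}$, and only the super-uniformity $\PP(p_j^*\leq t\mid [V_1,\dots,V_n,V_{n+j}])\leq t$ is needed. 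You instead keep the original $p_\ell^{\dtm}$ and observe that, conditional on the multiset, each is a non-decreasing function of $V_{n+j}$ (through the term $-\ind\{V_{n+j}<\hat V_{n+\ell}\}$ in the count), which yields a conditional PRDS property and lets you invoke the full Benjamini--Yekutieli machinery. Both arguments condition on the same objects; the paper's augmentation trick buys exact conditional independence and a more elementary final step, while yours skips the auxiliary p-values and the case analysis at the price of the heavier PRDS lemma. Two points need care in your write-up. First, the PRDS property you establish holds conditional on $(\mathcal{G}_j,\mathcal{S})$, not on $\mathcal{G}_j$ alone, and PRDS is not preserved under mixing; you must apply the BH-plus-PRDS bound conditional on both $\mathcal{G}_j$ and $\mathcal{S}$ (where super-uniformity of $p_j^{*,\dtm}$ also holds by exchangeability) and then integrate the constant bound $q/m$. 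Second, your closing remark has the swap backwards: the leave-one-out step replaces the cardinality of the \emph{original} rejection set by that of the \emph{modified} one (decreasing an already-rejected p-value leaves the BH output unchanged), and the PRDS bound is then applied to the modified vector $\tilde p$ and its own rejection set---exactly as in the proof of Theorem~\ref{thm:exchangeable}.
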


 This result may be of interest in the case where one is sampling from a finite set without replacement.
 {For instance, in drug discovery, the 
 calibration data may be molecules that have already been evaluated; 
 to curate such dataset, 
 it is common to randomly sample and evluate a fixed number of molecules 
 from a fixed drug library.  
 Conditional on all the data in the  library, 
 the randomness from sampling still ensures 
 the exchangeability conditions in Theorem~\ref{thm:determ}. 
 On the contrary, 
 the i.i.d.~assumptions in~Theorem~\ref{thm:exchangeable} may not hold 
 under sampling without replacement.}
The proof of Theorem~\ref{thm:determ} is in Appendix~\ref{app:determ}. 
Its analysis no longer relies on 
the PRDS property developed in~\cite{bates2021testing} for i.i.d.~data.\footnote{A consequence is that 
our new technique can be used to 
show finite sample FDR control for the outlier detection 
problem in~\cite{bates2021testing}  
under a similar exchangeability condition.}

\subsection{Setting the testing thresholds} 

Our framework allows us to test the random hypothesis~\eqref{eq:Hj} 
for general thresholds $\{c_j\}$.  
The simplest case is to set $c_j=\tau$, where 
$\tau$ is some constant which could possibly be obtained from 
an independent training process.  
It covers binary or one-versus-all classification problems 
as well as many scenarios in regression modeling 
where a threshold on the outcomes can be decided beforehand. 
Consider an application 
of large-scale screening in early-stage diagnosis~\citep{shen2019deep}, 
where the practitioner would like to find individuals 
with high unobserved health risk.
The threshold of the risk measure as being hazardous 
could be decided by domain knowledge, 
or by looking at the
experience of former patients.  
In this case, as  $\tau$ is  independent of all subsequent steps, it  
can be viewed as fixed and 
the conditions in Theorem~\ref{thm:exchangeable} are thus satisfied. 
However, we do note that $\tau$ should not depend on 
the calibration data, because this would potentially break the 
mutual independence condition 
on $\{Z_i\}_{i=1}^n$ and $\{\tilde{Z}_{n+\ell}\}_{\ell\neq j}$
in Theorem~\ref{thm:exchangeable} and invalidate  FDR control. 

More generally,  
$c_j$ can also be a random variable  for test sample $j$ as discussed below. 

\vspace{-0.5em}

\begin{example}[Random variable associated with test sample]\normalfont
\label{ex:random}
In early disease diagnosis, 
one might want to identify individuals 
whose future cholestrol level $Y_{n+j}$ in a month 
would be higher than $c_j := W_{n+j}$, their current measurements 
upon a first visit. 
In this case, we simply modify 
the construction of $p_j$ in Algorithm~\ref{alg:bh} 
to $\hat{V}_{n+j} = V(X_{n+j},W_{n+j})$.  
All the conditions in Theorem~\ref{thm:exchangeable} remain true 
as long as $\{(X_{n+j},Y_{n+j})\}_{j=1}^m$ are i.i.d.~pairs, 
and 
$\{(X_{n+j},Y_{n+j},W_{n+j})\}_{j=1}^m$ are i.i.d.~triples, hence
our method still yields valid FDR control.  
\end{example}

Continuing the above example,  
we allow for situations where 
the random variable 
$W$ is unobserved (missing)
in the calibration data. 

\vspace{-0.5em}

\begin{example}[Missing variable in the calibration data] \normalfont
\label{ex:missing}
While the goal is to compare the future cholesterol level  
of a new patient to that upon admission,  
the cholesterol level upon a first visit, $W_i$,  may not be measured 
for former patients $i=1,\dots,n$ in the calibration set. 
This setting cannot be turned into a 
classification problem because the $(W,Y)$ pair is 
never simultaneously observed for calibration samples. 
However, our method is still applicable with $\hat{V}_{n+j}=V(X_{n+j},W_{n+j})$ 
as the conditions in Theorem~\ref{thm:exchangeable} still hold.  
\end{example}

Another example that is closely related to the missing variable setting 
is to infer whether the counterfactual is larger or smaller 
than the realized outcome in causal inference. 

\vspace{-0.5em}

\begin{example}[Counterfactual inference] \normalfont
\label{ex:counterfactual}
Predicting counterfactuals 
is another application whose setup is similar to Example~\ref{ex:missing}~\citep{lei2021conformal,jin2023sensitivity}. 
Under the potential outcomes framework~\citep{imbens2015causal}, we 
let $\{X_i,T_i,O_i(1),O_i(0)\}_{i=1}^N$ be i.i.d.~tuples from an unknown 
super-population $\PP$, where $X_i\in \cX$ is the vector of covariates, 
$T_i\in\{0,1\}$ is the treatment indicator, and $(O_i(1),O_i(0))\in \RR^2$ are 
potential outcomes under treatment and control, respectively. 
We observe  
$\{(X_i,O_i,T_i)\}_{i=1}^N$ where $O_i=O_i(T_i)$.  
We consider completely randomized experiments  
where for some $p\in(0,1)$, 
$T_i\sim \textrm{Bern}(p)$ are independent of all other quantities. 
Counterfactual inference (e.g.~predicting
$O_i(0)$ when $T_i=1$) asks what would happen should 
unit $i$ receive  another treatment status. 
A potential application is to find treated units with $O_{i}(1)\leq O_{i}(0)$ 
so that they might drop out early from the experiment to avoid adverse effects. 
In this case, we take all the control units $\{(X_i,O_i(0))\}_{i=1}^n$ as the calibration data,  
each i.i.d.~from $\PP_{X,O(0)\given T=0}$. 
The test data are $\{X_{n+j}\}_{j=1}^m$  for which 
$\{O_{n+j}(0)\}_{j=1}^m$ are not observed.  
That is, we set the response as $Y=O(0)$ for all samples
and $c_j = O_{n+j}(1)$ (the observed outcome) in~\eqref{eq:Hj} which is a random variable.  
This task cannot be turned into a 
classification problem because  $(O_i(1),O_{i}(0))$  is 
never simultaneously observed for any unit. 
However, letting $\hat{V}_{n+j}=V(X_{n+j},c_j)$, 
the conditions in Theorem~\ref{thm:exchangeable} still hold, because 
randomized treatments imply
$\PP_{X,Y\given T=1}=\PP_{X,O(0)}=\PP_{X,Y\given T=0}$.  
In this way, our method identifies multiple 
treated units among whom a prescribed proportion have
negative individual treatment effects. 
\end{example}

\subsection{Asymptotic analysis and choice of nonconformity score}
\label{subsec:asymp}

While the only requirement for the validity of $\cfbh$ 
is the monotonicity of the nonconformity score function, 
a carefully constructed score might 
enhance the 
power. 
This concerns two aspects: 
(i) what should the prediction machine pursue (as a function of $x$), 
and (ii) given any output of 
the prediction machine, which nonconformity score (as a function 
of both $x$ and $y$) should 
the practitioner use. 
We offer some heuristics  
through asymptotic analysis. 

To simplify the discussion, we take the thresholds as $c_j=0$, 
and assume the samples $\{(X_i,Y_i)\}_{i=1}^{n+m}$ are i.i.d.~from 
an unknown distribution. 
In this case, the outcomes can be encoded into a binary variable 
$\ind \{Y_{i}>0\}$ for $i\in \cD_\calib\cup\cD_\test$. 
We thus consider the case $Y\in\{0,1\}$, 
and define 
\$
\text{Power} = \EE\Bigg[ \frac{\sum_{j=1}^m \ind\{j\in \cR, Y_{n+j}=1\}}{1\vee\sum_{j=1}^m \ind\{Y_{n+j}=1\}}  \Bigg].
\$
We consider the regime where both $n$, the size of the calibration set, 
and $m$, the size of the test set, 
tend to infinity. 
Throughout, 
we hold the nonconformity score $V$ as fixed, 
and the randomness is only in the calibration 
and test samples.  
The proof of 
the next proposition is in Appendix~\ref{app:prop_asymp}. 

\begin{prop}
\label{prop:asymptotic}
Let $V$ be any fixed monotone nonconformity score, and 
suppose $\{(X_i,Y_i)\}_{i=1}^{n+m}$ are i.i.d. 
Define $F(v,u) = \PP(V(X,Y)<v) +u\cdot \PP(V(X,Y)=v) $ 
for any $v\in \RR$ and $u\in[0,1]$. 
Define 
$
t^* =  \sup\big\{ t\in [0,1]\colon  t/\PP ( F(V(X,0), U ) \leq t  )  \leq q  \big\}. 
$
Suppose that for any sufficiently small $\epsilon>0$, 
there exists some $t\in(t^*-\epsilon,t^*)$ such that $t/\PP ( F(V(X,0), U ) \leq t  )  < q$. 
Then the output $\cR$ of $\cfbh$ satisfies 
\$\normalfont
&\lim_{n,m\to \infty} \fdr = \frac{ \PP \{ F(V(X,0),U)\leq t^*, Y \leq 0  \}}{\PP\{F(V(X,0),U)\leq t^* \} },\quad \textrm{and} \\
&\lim_{n,m\to \infty} \textnormal{Power} = 
\frac{ \PP \{ F(V(X,0),U)\leq t^*, Y>0 \}}{\PP\{Y>0 \} }.
\$
\end{prop}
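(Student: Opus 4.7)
The plan is to reduce the analysis to the asymptotic behavior of the Benjamini--Hochberg procedure applied to an idealized i.i.d.\ sequence of p-values that approximates $\{p_j\}$ in the large-$n$ limit. Define $\tilde p_j \defn F(U_j, V(X_{n+j},0))$, obtained by replacing the empirical CDF of the calibration scores in the construction of $p_j$ by the population CDF $F$. Since $(X_{n+j},Y_{n+j},U_j)$ are i.i.d.\ across $j$, so are the $\tilde p_j$, and the limiting CDF $G(t) \defn \PP(\tilde p_1\leq t)$ is precisely the quantity appearing in the definition of $t^*$.

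The first step is to establish a uniform approximation $\sup_{1\leq j\leq m}|p_j-\tilde p_j|=o_P(1)$ as $n\to\infty$. Up to a $1/(n+1)$ normalization, $p_j$ equals $\hat F_n(U_j,\hat V_{n+j})$ with $\hat F_n(u,v) \defn \frac{1}{n}\sum_{i=1}^n\big(\ind\{V_i<v\}+u\,\ind\{V_i=v\}\big)$, while $\tilde p_j=F(U_j,\hat V_{n+j})$. The Glivenko--Cantelli theorem applied to the i.i.d.\ calibration scores $\{V_i\}$ gives $\sup_{u,v}|\hat F_n(u,v)-F(u,v)|\to 0$ a.s., which is uniform in $v$ (and trivially in $u\in[0,1]$) and therefore yields the bound even as $m$ grows with $n$. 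Combining this with Glivenko--Cantelli for the i.i.d.\ sequence $\{\tilde p_j\}$ shows $\sup_t|\hat G_m(t)-G(t)|\to 0$ in probability, where $\hat G_m(t)\defn \frac{1}{m}\sum_{j=1}^m\ind\{p_j\leq t\}$; the joint empirical CDF of $\{(p_j,Y_{n+j})\}$ converges uniformly to its population counterpart by the same argument.

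The second step is to show that the BH threshold $\hat t\defn \max\{qk/m\colon \hat G_m(qk/m)\geq k/m\}$ satisfies $\hat t\to t^*$ in probability. For the lower bound, fix $\epsilon>0$; the regularity assumption yields $t_0\in(t^*-\epsilon,t^*)$ with $G(t_0)>t_0/q$ strictly, and the uniform convergence of $\hat G_m$ then forces $\hat t\geq t_0$ with probability tending to one. For the upper bound, any $t\in(t^*,1]$ satisfies $t/G(t)>q$ by definition of $t^*$, so the BH criterion fails at such $t$ once $\hat G_m$ is close enough to $G$, giving $\hat t\leq t^*+\epsilon$ with probability tending to one. Plugging $\hat t\to t^*$ into the formulas for $|\cR|/m$ and the $\fdp$ numerator and invoking the uniform convergence of the joint CDFs yields convergence in probability of both $\fdp$ and the empirical power ratio to the limits asserted in the proposition; since both quantities are bounded in $[0,1]$, bounded convergence promotes this to convergence in expectation, giving the claimed limits for $\fdr$ and power.

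The main obstacle is the convergence $\hat t\to t^*$ in the second step. The BH threshold is a maximum over a discrete grid and depends non-smoothly on $\hat G_m$, so a direct continuous-mapping argument is unavailable; the regularity condition---that $G(t)/t$ strictly exceeds $1/q$ somewhere arbitrarily close below $t^*$---is exactly what is required to pin down the empirical crossing from below, and it plays the same role as in classical asymptotic FDR analyses. Handling the boundary case $t^*=0$ (where the limit should be interpreted as $\cR=\emptyset$) and verifying that the bounds in the first step are genuinely joint in $(n,m)$ rather than sequential are minor technical points that follow from the uniformity of the Glivenko--Cantelli estimates.
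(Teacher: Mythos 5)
Your proposal is correct and follows essentially the same route as the paper's proof: uniform convergence of the calibration empirical CDF to $F$, transfer to uniform convergence of the empirical distribution of the (non-i.i.d.) $p_j$'s to $G(t)=\PP(F(U,V(X,0))\leq t)$ using the continuity of that limit, convergence of the BH threshold to $t^*$ via the stated regularity condition, and bounded convergence to pass from FDP/empirical power to FDR/power. The only cosmetic difference is that you route the key uniform-CDF step through the idealized p-values $\tilde p_j$ while the paper proves it directly with a partition-and-sandwich argument (its Lemma~\ref{lem:tau_conv}); the substance is the same.
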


In words, Proposition~\ref{prop:asymptotic} states that   
the rejection threshold for conformal 
p-values in $\cfbh$ converges 
to $t^*$, leading to the convergence of FDR. 
{By definition, $t^*$ is the largest value of $t$ such that  
the mass of $F(V(X,Y),U)$ below $t$ is less than $q$ times 
that of $F(V(X,0),U)$. 
Since $F(v,u)$ is increasing in the first argument,  
$t^*$ is large  if $V(X,0)$ has a far more heavier left tail than $V(X,Y)$. 
One such case is when the probability of $Y=1$ is large for small $V(X,0)$. 
For instance, if we use $V(x,y):=y-\hat\mu(x)$ for a point prediction $\hat\mu(\cdot)$, 
then $t^*$ is large if $Y=1$ is more probable for $X=x$ with large $\hat\mu(x)$.}  
The convergence is not necessarily true if 
the asymptotic FDR lies on the critical line near $t^*$, i.e., if 
there exists some $t_1<t^*$ such that 
$t/\PP ( F(V(X,0), U) \leq t  ) = q$ for all $t\in[t_1,t]$. 
The technical condition in Proposition~\ref{prop:asymptotic} 
rules out this situation, and actually guarantees that 
\$
q = \frac{t^* }{ \PP \{ F(V(X,0), U) \leq t^* \} } =  \frac{\PP \{ F(V(X,Y), U) \leq t^* \} }{ \PP \{ F(V(X,0), U) \leq t^*  \} } 
\$ 
since the distribution of $F(V(X,0),U)$ has no point mass. 
In particular, 
the asymptotic FDR of $\cfbh$
satisfies 
\#\label{eq:bd_fdr_asymp}
\frac{ \PP \{ F(V(X,0),U)\leq t^*, Y \leq 0  \}}{\PP\{F(V(X,0),U)\leq t^* \} }
&\leq 
\frac{ \PP \{ F(V(X,Y),U)\leq t^*, Y \leq 0  \}}{\PP\{F(V(X,0),U)\leq t^* \} }  \notag \\
&\leq \frac{ \PP \{ F(V(X,Y),U)\leq t^*   \}}{\PP\{F(V(X,0),U)\leq t^* \} } \leq q.
\#
 
To further simplify, we suppose 
the distribution of $V(X,Y)$ does not have a
point mass, hence $F(u,v) = \PP(V(X,Y)\leq v)$. 
We also let $v^* = \sup\{v\colon \PP(V(X,Y)\leq v)\leq t^*)$, 
such that $F(u,v)\leq t^*$ if and only if $v\leq v^*$. 
Thus,~\eqref{eq:bd_fdr_asymp} becomes 
\#\label{eq:bd_fdr_asymp_01}
\frac{ \PP \{ V(X,0)\leq v^*, Y = 0  \}}{\PP\{V(X,0)\leq s^* \} } = 
\frac{ \PP \{ V(X,Y)\leq v^*, Y = 0  \}}{\PP\{V(X,0)\leq s^* \} }  
\leq \frac{ \PP \{ V(X,Y)\leq v^*  \}}{\PP\{ V(X,0)\leq v^* \} } \leq q.
\#

\paragraph{Choice of $V$.} 

We first investigate~\eqref{eq:bd_fdr_asymp_01} 
to provide some heuristics on the choice of $V$. 
If 
we could design some nonconformity score $V$ such that 
\#\label{eq:iff}
V(X,Y)\leq v^* \quad \Rightarrow \quad Y=0,
\#
then the first inequality in~\eqref{eq:bd_fdr_asymp_01} becomes an equality.
Given a prediction $\hat\mu(x)$ from any machine learning algorithm, 
if one would like to select individuals with larger values of $\hat\mu(X_{n+j})$, 
one might design a nonconformity score $V$ such that 
\$
V(x,0) = - \hat\mu(x),\quad V(x,1) = +\infty. 
\$
In this way, selecting cases where $V(X_{n+j},0)$ 
is small is equivalent to selecting large $\hat\mu(X_{n+j})$, 
and this choice guarantees~\eqref{eq:iff} as long as $v^*<\infty$. 
We recommend a relaxation given by 
\#\label{eq:clip}
V(x,y) = M\cdot y - \hat\mu(x)
\#
for some sufficiently large constant $M$.  
This  ``clipped'' score obeys
$\inf_{x} V(x,1) = M-\sup_x \hat\mu(x) \geq  \sup_{x}V(x,0)$ 
if $M\geq 2\sup_{x}|\hat\mu(x)|$. 
That is, the nonconformity score for $Y=1$ is always 
larger than that for $Y=0$, regardless of the value of $x$. 
Recalling the definitions, we know $t^* \leq q\cdot \PP(F(V(X,0),U)\leq t^*)\leq q$. 
Thus, when $q<\PP(Y=0)$, 
by definition, $v^*$ is smaller than the $q$-th quantile of $V(X,Y)$. 
As a result,~\eqref{eq:iff} holds exactly and the 
first inquality in~\eqref{eq:bd_fdr_asymp_01} is an equality---that is, 
using~\eqref{eq:clip} could potentially yield 
a value of FDR close to the nominal level $q$, using up all the FDR budget; 
we thus anticipate a higer power. 
We indeed verify these heuristics in our simulations.

\paragraph{Choice of $\hat\mu$.}
We then discuss the choice of $\hat\mu$ when 
$V(x,y) = My-\hat\mu(x)$ and~\eqref{eq:iff} holds. 
Recall that given the conditions in Proposition~\ref{prop:asymptotic}, 
the last inequalities in~\eqref{eq:bd_fdr_asymp} and~\eqref{eq:bd_fdr_asymp_01}
are exact equalities. Hence  
\$
\lim_{n,m\to\infty} \fdr = \frac{\PP\{-\hat\mu(X)\leq v^*,Y=0\}}{\PP\{-\hat\mu(X)\leq v^*\}}, \quad 
\lim_{n,m\to \infty} \text{Power} = 
\frac{ \PP \{ -\hat\mu(X)\leq v^*, Y=1 \}}{\PP\{Y=1 \} }.
\$ 
Since our procedure always ensures 
that the asymptotic FDR is below $q$, 
letting $f(x)=v^*+\hat\mu(x)$, 
we could view asymptotic power maximization as solving 
an optimization problem 
\$
\textrm{maximize} &\quad  \PP \{ f(X)\geq 0, Y=1  \} \\ 
\text{subject to}&\quad  \frac{\PP\{f(X)\geq 0,Y=0\}}{\PP\{f(X)\geq 0\}} \leq q. 
\$
Equivalently, this is 
\$
\textrm{maximize} &\quad  \EE\big[ \ind\{f(X)\geq 0\}  \PP(Y=1\given X)   \big] \\ 
\text{subject to}&\quad  \EE\big[ \ind\{f(X)\geq 0\} \big( \PP(Y=0\given X)-q\big)   \big] \leq 0. 
\$
By Neyman-Pearson lemma, 
the optimal choice of $f$ should be a monotone function of 
$
\PP(Y=1\given X)  
$. 
That is, we should 
aim for some $\hat\mu(x)$ that is monotone in $\PP(Y=1\given X=x)$. 
The most convenient option is to fit 
$\PP(Y=1\given X=x)$. 
This heuristic derivation leads to a quite 
intuitive recommendation: 
the predicted score should indeed aim to 
reflect how likely $Y=1$ is given $X$.

\section{Numerical experiments}
\label{sec:simu}

We evaluate our method 
on simulated datasets, 
leading to some practical suggestions.  
We generate i.i.d.~covariates $X_i\sim$Unif $[-1,1]^{20}$ 
and responses $Y_i = \mu(X_i)+ \epsilon_i$, where 
$\mu(x)=\mathbb{E}[Y\,|\, X=x]$ is nonlinear in $x$, 
and $\epsilon_i$ is the independent random noise.  
We design 8 simulation settings 
to demonstrate the performance of our methods 
under various data generating processes, 
with 
different configurations of  $\mu(\cdot)$ and 
distributions for the $\epsilon_i$'s. 
In particular, we vary 
(i) whether the image $\{\mu(x)\colon x\in[-1,1]^{20}\}$ 
is a continuous set, and 
(ii) whether the noise is heterogeneous, 
such that 
the hardness of correctly identifying those outcomes exceeding zero
varies. 
The details of
all settings 
are summarized in Appendix~\ref{app:subsec_simu_detail}. 
The reproduction codes for this part can be found at 
\url{https://github.com/ying531/selcf_paper}.

The task is to select individuals with $Y_{n+j}>0$ 
among all test samples. 
We fix the sizes of training and calibration data 
at $n = |\cD_\train| = |\cD_\calib| = 1000$ 
and vary the test sample size $|\cD_\test| \in \{10,100,500,1000\}$. 
We use gradient boosting, SVM with \texttt{rbf} kernel, 
and random forest to fit 
a regression model $\hat{\mu}(\cdot)$ for $\EE[Y\given X]$, 
all from the \texttt{scikit-learn} Python library without fine tuning. 
We then apply $\cfbh$ and Algorithm~\ref{alg:bh0} at the 
FDR target $q =0.1$, which, together with the Bonferroni baseline, 
leads to four algorithm configurations: 
\vspace{-0.3em}
\begin{enumerate}[itemsep=-0.5ex]
  \item \texttt{BH\_sub}: $\cfbhnull$ (Algorithm~\ref{alg:bh0}) with $\hat{V}_{n+j} = -\hat{\mu}(X_{n+j})$;
  \item \texttt{BH\_res}: $\cfbh$ with $V(x,y) = y - \hat\mu(x)$;
  \item \texttt{BH\_clip}: $\cfbh$ with $V(x,y) = M\cdot\ind\{y>0\} - \hat\mu(x)$ and  
  a large constant $M=100$; this value is chosen to 
  ensure it is larger than $ 2\sup_x|\hat\mu(x)|$; 
  \item \texttt{Bonferroni}: Select all $p_j \leq q/m$ with $V(x,y)$ the same as in \texttt{BH\_clip}.
\end{enumerate}  

{Algorithm~\ref{alg:bh0} used in \texttt{BH\_sub} is formally introduced in 
Appendix~\ref{app:condition_variant}; when applied to 
classification problems, it is equivalent to 
the score-based methods of~\cite{mary2021semi} and~\cite{rava2021burden}. 
The only difference from $\cfbh$ is that 
Algorithm~\ref{alg:bh0} ($\cfbhnull$) uses 
$\{(X_i,Y_i)\colon i\in \cD_\calib,~ Y_{i}=0\}$ as the calibration data  
when constructing conformal p-values~\eqref{eq:pj}, which  leads to a 
slightly stronger theoretical guarantee although it comes at a price: loss of power (see Appendix~\ref{app:subsec_classify}). 
Other than this, we are not aware of alternative methods for exact control of FDR in classification.}
\vspace{-0.3em} 

\subsection{Valid FDR control}

We empirically evaluate the FDR by averaging the FDP 
$
\frac{ \sum_{j\in \cD_\test}\ind\{j\in \cR,Y_{n+j}>0\}}{1 \vee |\cR|}
$
over $N=1000$ independent runs ($\cR$ is the rejection set). 
We observe similar power and FDR for different values of $n_\test$, 
hence we only plot 
the results for $q=0.1$ and $n_\test=100$ in Figure~\ref{fig:fdr_01}.  
The FDR is controlled below $q=0.1$ 
in all configurations, showing the validity 
of our procedure. 
{In particular, the FDR of \texttt{Bonferroni} is always close to zero.}

\begin{figure}[h]
  \centering
  \includegraphics[width=6in]{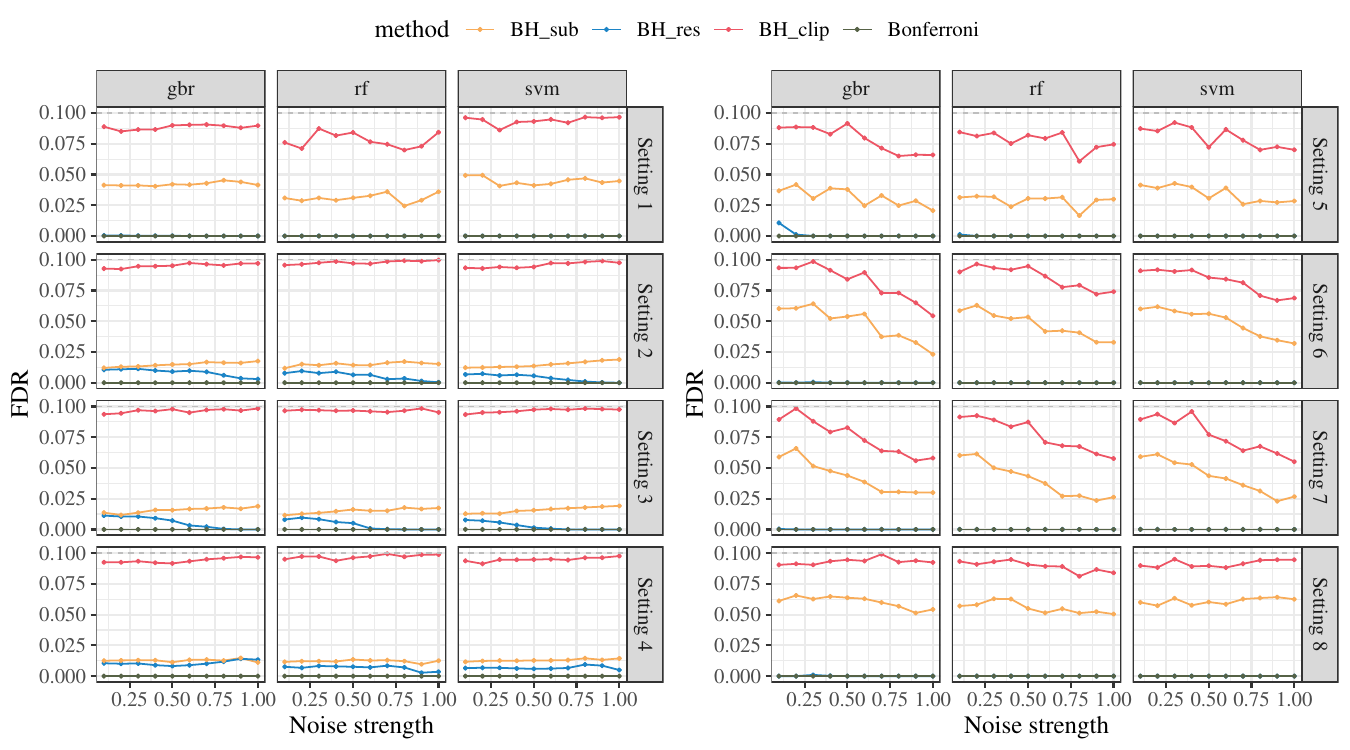}
  \caption{Realized FDR for four procedures at 
  FDR target $q=0.1$. 
  Each row corresponds to one data generating process, 
  and each column corresponds 
  to one regressor (\texttt{gbr} for gradient boosting, \texttt{rf} 
  for random forest, \texttt{svm} for support vector machine). 
  The $x$-axis is the parameter $\sigma$ for the noise level 
  of $\epsilon_i$, whose precise definition 
  is in Appendix~\ref{app:subsec_simu_detail}.}
  \vspace{-1em}
  \label{fig:fdr_01}
\end{figure}

Among the three nonconformity scores, 
the realized FDR of \texttt{BH\_clip} is the highest
in all settings, and is often very close to the nominal level. 
FDR also varies with the regression algorithms that are adopted, 
but the variation is not very large.

In settings 1-4 and 8, the FDR levels of different methods 
are relatively stable across various noise strengths. 
In settings 5-7, the FDR decreases 
as the noise level (the $x$-axis) increases.   
It might seem counterintuitive because 
at first sight, one might think that a harder problem  
(larger noise) would lead to a higher error rate. 
However, we observe that it is accompanied by  
lower power (Figure~\ref{fig:power_01} 
as we will present shortly) and a smaller rejection set (Figure~\ref{fig:nrej_01}  
in Appendix~\ref{app:subsec_plots}).  
This might be contributed by two factors. 
The first is the increased difficulty of prediction; with larger noise, 
machine learning  
is less capable of capturing the heterogeneity in the true 
conditional mean function $\mu(X_{n+j})$. 
Since 
a test sample needs to have a sufficiently small 
value of $V(X_{n+j},0)$ 
to be selected, such lack of heterogeneity 
leads to small selection sets. 
The second is the decreased confidence even with ground truth available: 
even when $\mu(X_{n+j})$ is known, 
when the noise is too large, 
there is hardly any value of the covariate for which 
one has a high confidence that $Y>0$. 
To be more specific, 
when $c_j=\tau$ is a constant, 
the selection set 
is fully decided by $\cD_\calib \cup\{X_{n+j}\}_{j\in \cD_\test}$; 
in addition, 
 $\{Y_{n+j}\}_{j\in \cD_\test}$ are independent of $\cD_\calib$ conditional on 
$\{X_{n+j}\}_{j\in \cD_\test}$ 
by the i.i.d.~assumption. 
The tower property then implies
\#\label{fdr:cond}
\fdr = \EE\Bigg[  \frac{\sum_{j=1}^m \ind\{j\in \cR\} \ind\{ Y_{n+j}\leq 0\}}{1\vee|\cR|}    \Bigg] 
= \EE\Bigg[  \frac{\sum_{j=1}^m \ind\{j\in \cR\} \PP\big( Y_{n+j}\leq 0\given X_{n+j}\big) }{1\vee|\cR|}    \Bigg],
\#
which is roughly the average of $\PP(Y_{n+j}\leq 1\given X_{n+j})$ among 
selected individuals. 
Thus, when $\PP(Y \leq 0 \given X=x) > q$ for almost all $x\in \cX$, 
sometimes one needs to output $\cR=\varnothing$ 
in order to keep the FDR below $q$, 
leading to smaller selection sets. 
In general, as the selection becomes difficult, 
selected units should have extremely small nonconformity scores 
and extremely strong confidence in a positive response, 
resulting in a lower FDR.

\subsection{Power}

We evaluate 
power by averaging 
$
\frac{ \sum_{j\in \cD_\test}\ind\{j\in \cR,Y_{n+j}>0\}}{\sum_{j\in \cD_\test}\ind\{Y_{n+j}>0\}},
$ 
the proportion of correct selections 
among all positive test samples, over all replicates. 
We again observe stable power across 
different values of $n_\test$, 
hence we only plot the average power for $q=0.1$ and $n_\test=100$ 
in Figure~\ref{fig:power_01}. 
\texttt{BH\_clip} always has the highest power 
while \texttt{BH\_res} always has the lowest power (excluding \texttt{Bonferroni}); 
\texttt{BH\_sub} is sometimes closer to \texttt{BH\_clip} 
and sometimes closer to \texttt{BH\_res}. 
We note that the general applicability of \texttt{BH\_res} 
comes with its low power in such binary classification problems, 
while the other two 
(which are only applicable for fixed thresholds) are more powerful.   
{Finally, the Bonferroni correction nearly has no power (even if we set 
$V(x,y)$ to be the same as the most powerful \texttt{BH\_clip}).}

\begin{figure}[h]
  \centering
  \includegraphics[width=6in]{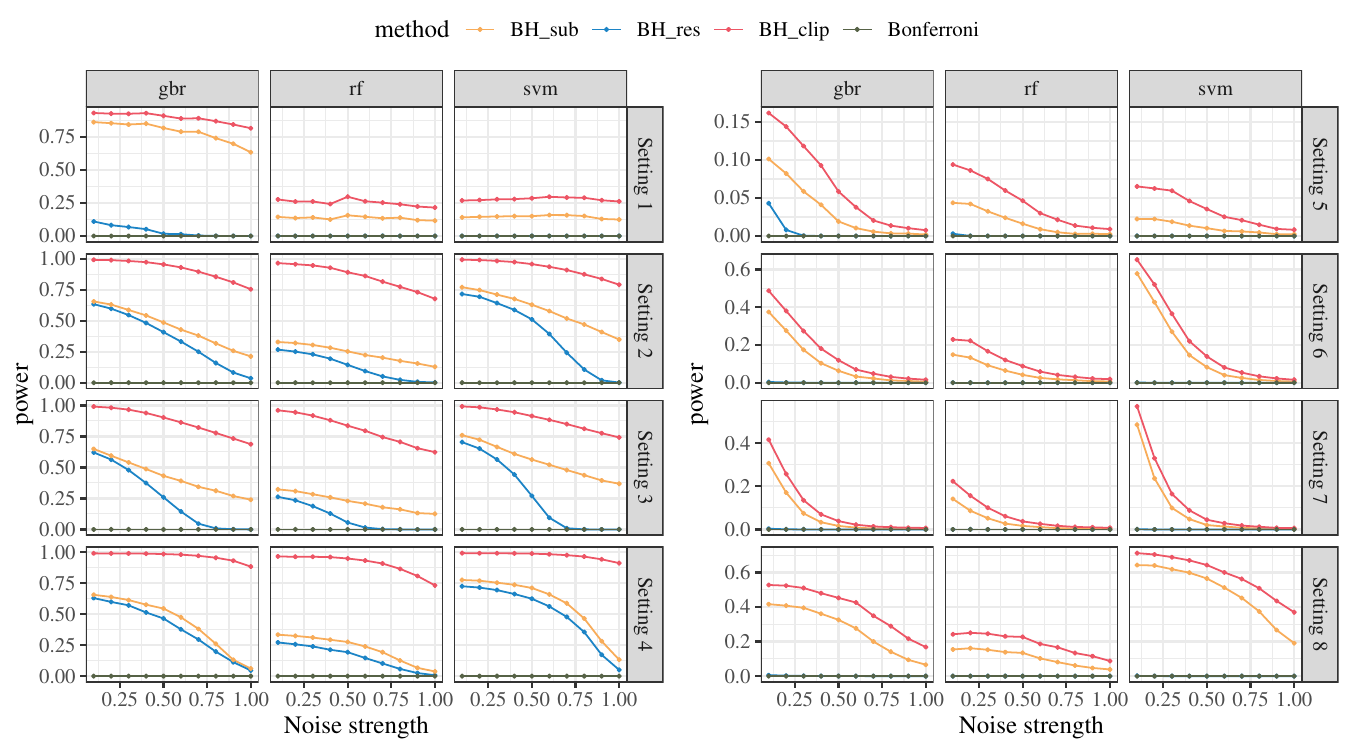}
  \caption{Realized power of four procedures at 
  FDR target $q=0.1$ for various data generating processes. 
  Details  are otherwise the same as Figure~\ref{fig:fdr_01}.}
  \label{fig:power_01}
  \vspace{-1em}
\end{figure}

The power of our procedure also varies with the prediction algorithms (the columns). 
In setting 1 where the FDR is similar across the three algorithms, 
the power is actually drastically different: \texttt{gbr} performs 
the best in most settings; 
however, \texttt{svm} performs the best in settings 4 and 8 
where there is strong heterogeneity in the distribution of $\epsilon_i\given X_i$, 
in which case the other two prediction algorithms might fail 
to capture the true dependence between $Y$ and $X$.

The power 
decreases as the noise strength increases in all settings. This is because 
larger noise makes it more difficult to 
fit the prediction model, and the fundamental detection hardness 
increases 
as sketched in~\eqref{fdr:cond}. 
\eqref{fdr:cond} also 
implies that in practice, the FDR target $q$ should 
be properly chosen: in situations where $\PP(Y_{n+j}>0)$ 
is small, it might be 
too demanding to choose a very small $q$ since 
$X_{n+j}$ for which $\PP(Y_{n+j}>0\given X_{n+j})$ 
is very large may not exist. 
A practical choice might be some $q$ 
that is moderately larger than the marginal 
proportion of positive 
$Y$'s in the training data (our method 
is still valid). 
In this way, 
when there exists some region in $\cX$ where $\PP(Y >0\given X )\geq 1-q$, 
our method finds critical subsets while 
achieving some power.

\section{Real data application}

\subsection{Candidate screening in recruitment}

We apply $\cfbh$ as an automatic screening tool 
in recruiting, where 
a human resources staff 
uses machine learning prediction to 
screen all applicants and shortlist some for subsequent 
test and interviews. 
In this application, machine learning is used to 
predict whether a new candidate 
is qualified for the job (i.e., whether the recruitment is successful); 
a higher predicted value might indicate a better fit to the position, 
but no guarantee can be provided for a black-box prediction machine. 
We will use $\cfbh$ to 
calibrate the prediction and 
generate a shortlist of candidates with rigorous FDR control, 
i.e., limiting  the proportion of unqualified individuals among the 
selected candidates.  

We assume new applicants to the position 
and previous applicants on record (such as those who applied
last year) are i.i.d.~from the same distribution. 
This is reasonable if the pool of applicants for the position 
is stable over the years. 
The recruiters may train any prediction model 
on  previous applicants 
and use any monotone nonconformity score as their choice.  
We use a small-scale recruitment dataset from Kaggle~\citep{datarecruit}, 
as recruitment datasets from companies are often confidential. 
There are $n_{\textrm{tot}} = 215$ samples in total. 
Each sample is from an applicant 
for the position; the data includes 
covariates about their education, work experience, gender, specialization, etc., 
and the response is a binary variable indicating whether 
the applicant is finally offered the job. 
Here, we use this binary outcome as a perfect proxy for the qualification of a candidate. 
We randomly split the data into a training set of size $|\cD_\train|=86$ 
and a test set of size $|\cD_\test|=43$. 
We first train a gradient boosting model 
to predict the job offering, 
using the \texttt{sckit-learn} Python 
library without fine tuning, 
and apply the three procedures (except Bonferroni since it is less powerful) 
in Section~\ref{sec:simu} for 
$q=\in\{0.1,0.2,0.5\}$.   
We plot the false discovery proportion (FDP) 
and power 
over $N=100$ independent runs in Figures~\ref{fig:recruit} and~\ref{fig:recruit_power}, 
respectively. 

\begin{figure}[h]
  \centering 
  \includegraphics[width=5.5in]{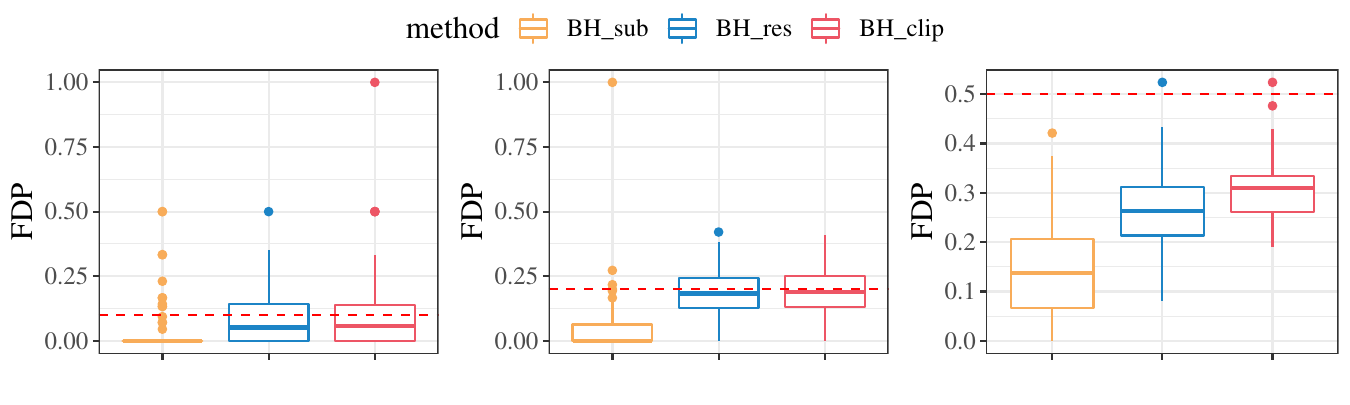}
  \caption{Boxplot for false discovery proportions over $N=100$ independent runs 
  for the recruitment dataset, with $q=0.1$ (left), $q=0.2$ (middle), 
  and $q=0.5$ (right). Red dashed lines are the nominal levels.}
  \label{fig:recruit}
\end{figure}

\begin{figure}[htbp]
  \centering 
  \includegraphics[width=5.5in]{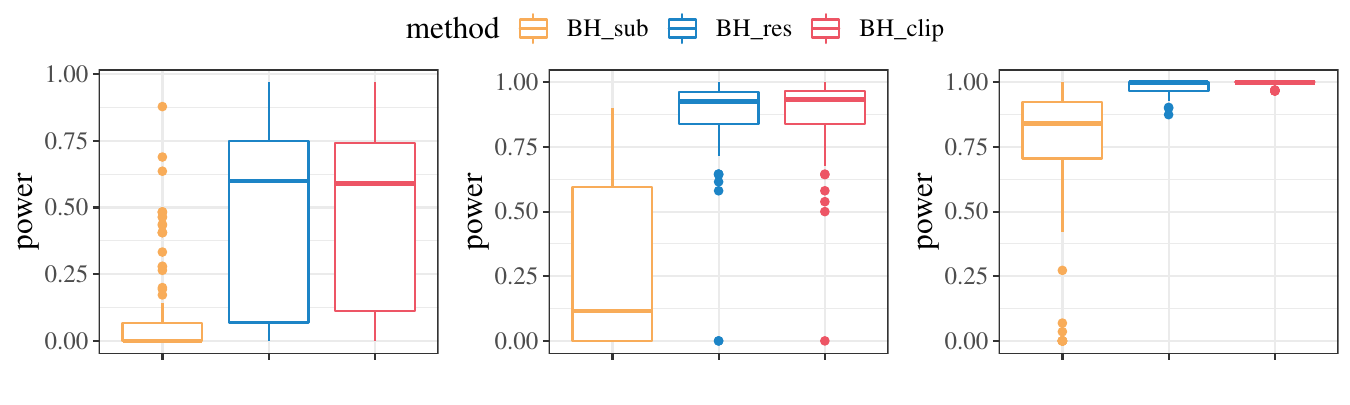}
  \caption{Boxplot for power over $N=100$ independent runs 
  for the recruitment dataset, with $q=0.1$ (left), $q=0.2$ (middle), 
  and $q=0.5$ (right).}
  \label{fig:recruit_power}
\end{figure}

All three scores achieve valid FDR control (averaging the FDPs). 
\texttt{BH\_res} and \texttt{BH\_clip} have similar FDP (hence FDR). 
The FDP of
\texttt{BH\_sub} is lower, potentially because of its low power, 
and also less stable than the other two. 
As \texttt{BH\_sub} only uses the subset of training samples with $Y=0$ 
to calibrate the selection set as~\cite{bates2021testing,rava2021burden} 
did (see discussion in Appendix~\ref{app:subsec_variant}), when such subset is small, 
the calibration, hence the FDP, can be unstable. 
Intuitively, 
\texttt{BH\_res} and \texttt{BH\_clip} achieve a more stable behavior
by using all the calibration data.

The power of these methods differ more significantly. 
In general, predicting qualified candidates from our data 
is a relatively easy task: once 
we allow the FDR level to be $0.2$ or $0.5$, 
\texttt{BH\_res} and \texttt{BH\_clip} could almost 
identify all qualified candidates. 
Both these methods achieve similar power 
while \texttt{BH\_clip} is again a little better. 
However, \texttt{BH\_sub}, which only uses training samples with $Y=0$ 
to calibrate the selection set as~\cite{bates2021testing}
and~\cite{rava2021burden} 
did, has much lower power. 
We discuss this issue in Appendix~\ref{app:sub_variant}: 
in finding positives for a binary response, 
our method can be more powerful than \texttt{BH\_sub} 
when there are many positive samples 
in the population.

\subsection{Drug discovery}

We apply $\cfbh$ to therapeutic datasets for drug discovery, 
focusing on two tasks: (i) selecting molecules that bind  to 
a target protein for a certain disease, 
and (ii) selecting drug-target  (molecule-protein) 
pairs with a high affinity score. 
Our main focus is to calibrate any given prediction model to 
limit false positives. Therefore, we use the pre-trained models and the 
prediction pipelines established in the DeepPurpose library~\citep{huang2020deeppurpose}.

\subsubsection{Drug property prediction for HIV}
\label{subsubsec:hiv}

We first consider the task of predicting drug properties
for a certain protein target for HIV. 
As we mentioned in the introduction, 
given a specific target, 
machine learning models 
are often trained on a representative subset of the whole drug library screened by 
HTS, and then used to predict the activity 
of the remaining proteins to find promising candidates.  
It is important to control for false positives in the shortlisted candidates.

We use the HIV screening dataset with a total size of $n_{\textrm{tot}}=41127$.
We randomly split the data into three folds with ratio $6:2:2$ in size. 
The first two folds contain binary outcomes 
indicating whether the drugs interact  with the disease. 
We use the first fold to train a machine learning model 
to predict the outcome, 
where the 
drugs are encoded into numerical features 
using Extended-Connectivity 
FingerPrints (ECFP) that characterize topological properties of molecules and compounds. 
We train a small neural network in only 3 epochs 
so that the whole procedure works well with CPUs; 
using more complicated or pre-trained 
networks might improve power but this is not the main focus here. 
The second fold serves as the calibration data. 
Our goal is to find active proteins in the last test fold 
while controlling for the false discovery rate. 

In the training fold, about $3\%$ of the drugs 
are active for the HIV disease. 
We choose the FDR levels among $q\in\{0.1,0.2,0.5\}$. 
We compute the empirical FDR, power, and average size of the selection set 
over $N=100$ independent runs 
of the procedure in Table~\ref{tab:protein}. 

\begin{table}[htbp]
\centering
{
\renewcommand\arraystretch{1.2}
\begin{tabular}{c|c |c |c|c|c|c|c|c|c}
\hline
  & \multicolumn{3}{c|}{FDR} & \multicolumn{3}{c|}{Power}& \multicolumn{3}{c}{$|\cR|$}\\
  \hline 
  Level $q$ & 0.1 & 0.2  & 0.5 & 0.1 & 0.2  & 0.5 & 0.1 & 0.2  & 0.5 \\
\hline
\texttt{BH\_clip} & 0.0957 & 0.196 & 0.495 & 0.0788 & 0.174  & 0.410 & 26.5 & 64.2 & 240  \\
\texttt{BH\_res} & 0.0989  & 0.196 &  0.494  & 0.0766 & 0.174  & 0.410 & 25.8 & 64.4 &239 \\
\texttt{BH\_sub} &0.0862 & 0.192 & 0.474 & 0.0739 & 0.169 &  0.397 & 24.8 & 61.8 & 222  \\
\hline
\end{tabular}
}%
\vspace{0.1em}
\caption{FDR and power of the three methods averaged over $N=100$ random splits.}
\vspace{-1em}
\label{tab:protein}
\end{table}

All three choices of nonconformity scores control FDR below the 
nominal levels. 
Their performance is also similar, while 
\texttt{BH\_clip} has the highest power 
and  \texttt{BH\_sub} is the least powerful, both with a small margin. 
This is because the positive samples in the population is extremely small, 
so that using $Y=0$ samples or the whole calibration set 
does not have a huge impact on the selection set. 

Using all three methods, the selection set consists of 
all test samples whose predicted binding affinity is above some value. 
This value is specific to 
the training model we use. 
Figure~\ref{fig:protein} shows the 
selection threshold of the predicted value for all configurations. 
If we control FDR at $q=0.1$, the predicted scores needs to be 
as large as $0.8$ to be considered  promising; 
this leads to around $25$ candidates among about $8000$ test samples. 
However, if we set $q=0.5$, then the thresholds are 
in the range $[0.2,0.4]$ 
most of the time: 
a moderately large score is sufficient to stand out. 

\begin{figure}[htbp]
  \centering 
  \includegraphics[width=5in]{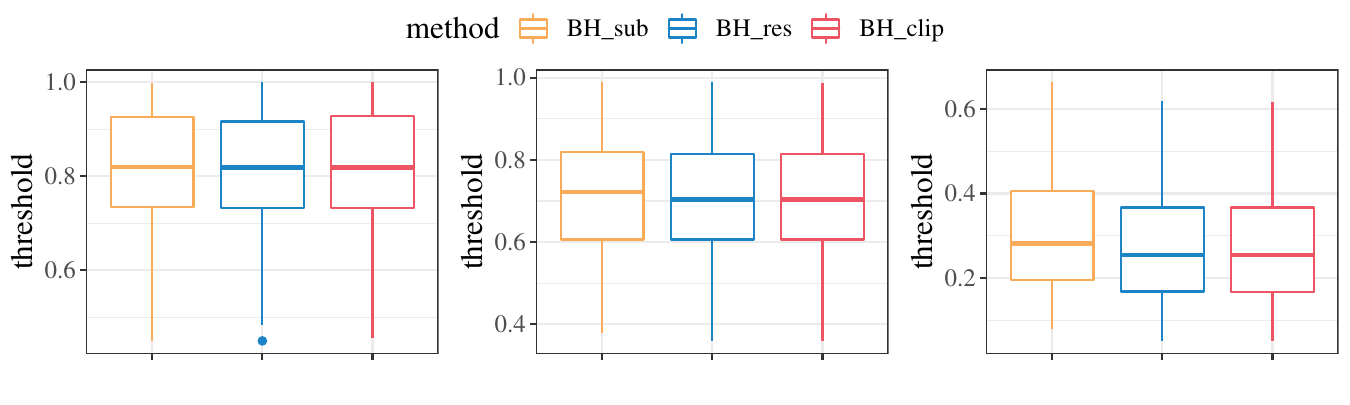}
  \caption{Selection thresholds over $N=100$ runs, 
  with $q=0.1$ (left), $q=0.2$ (middle), 
  and $q=0.5$ (right).}
  \label{fig:protein}
\end{figure}

\subsubsection{Drug-target interaction (DTI) prediction}

Last, we consider the task of predicting drug-target interactions (DTI) 
among a huge pool of drug-target pairs. 
This might be of use to a therapeutic company to 
prioritize its resources in developing drugs that might be 
effective for any of the targets they happen to be interested in. 
In this application, one may need to be 
cautious about the i.i.d.~assumption: this can be reasonable if 
the drugs and targets are drawn from a diverse library, 
and a representative subset of all pairs have been screened 
to form the training data. 

We use the DAVIS dataset published in~\cite{davis2011comprehensive}, 
which records real-valued  binding affinities 
for $n_{\textrm{tot}}=30060$ drug-target pairs.  
In this application, we mimic a scenario where a small proportion of 
the whole library has been screened, 
and one would like to find promising ones among a huge amount of pairs 
whose binding affinities are unknown. 
In particular, we randomly split the dataset into three folds of size
$2:2:6$; we use the first fold for training the model, 
the scond for calibration, and the last largest set as test samples. 
We use ECFP and 
Conjoint triad feature (CTF)~\citep{shen2007predicting,shao2009predicting} 
to encode the drugs and the targets into numeric features, respectively.
We train a small neural network over $10$ epochs. 
These choices are suitable for experiments on CPUs (one might of course 
use other more computationally intensive alternatives). 

Because the affinity is continuously valued, 
and to account for the heterogeneity in targets, 
we set $c_j$ as the $q_{\text{pop}}$-th quantile 
of the outcomes of the training samples with the same 
binding target as sample $j$, where $q_{\text{pop}} \in \{0.7, 0.8, 0.9\}$. 
Given a predicted score, there is no natural way to use \texttt{BH\_sub} in this setting, 
so we test the following two methods: 
\begin{itemize}[itemsep=-0.5ex]
  \item \texttt{BH\_res} with nonconformity score 
  $V(x,y) = y-\hat{\mu}(x)$, 
  \item \texttt{BH\_clip} with nonconformity score 
  $V(x,y; c) = M\ind\{y\geq c \} + c\ind\{y<c\}$ for $M=100$,  
\end{itemize}
where $x$ is the vector of features, $y$ is the binding affinity ranging in $[5,10]$, 
and $c$ is the threshold that is computable for both calibration and test samples. 
We set the FDR level at $q\in\{0.1,0.2,0.5\}$. 
There are $18$ configurations in total, with 2 nonconformity scores 
and $3\times 3$ combinations of $(q_{\text{pop}},q)$. 
In this case, since the threshold $c_j$ varies among the samples, 
the selection is not monotone in the predicted score. 

The empirical FDP, computed as $\frac{\sum_{j=1}^m \ind\{Y_{n+j}\leq c_j\}}{1\vee|\cR|}$ 
($\cR$ is the selection set), 
over $N=100$ independent runs 
is plotted in Figure~\ref{fig:dti_fdr}. 
For all configurations of $q_{\textrm{pop}}$, 
both methods control the FDR (average of FDPs) at the nominal level. 
However, there can be some variation in the FDP 
for $q=0.1$; \texttt{BH\_res} is less stable than \texttt{BH\_clip}. 
Also, the FDR from \texttt{BH\_clip} is very close to the nominal level 
while that from \texttt{BH\_res} is much lower. 
This is due to the low power of \texttt{BH\_res} as we show 
in the power plot (Figure~\ref{fig:dti_power}). 

\begin{figure}[h]
  \centering 
  \includegraphics[width=5.5in]{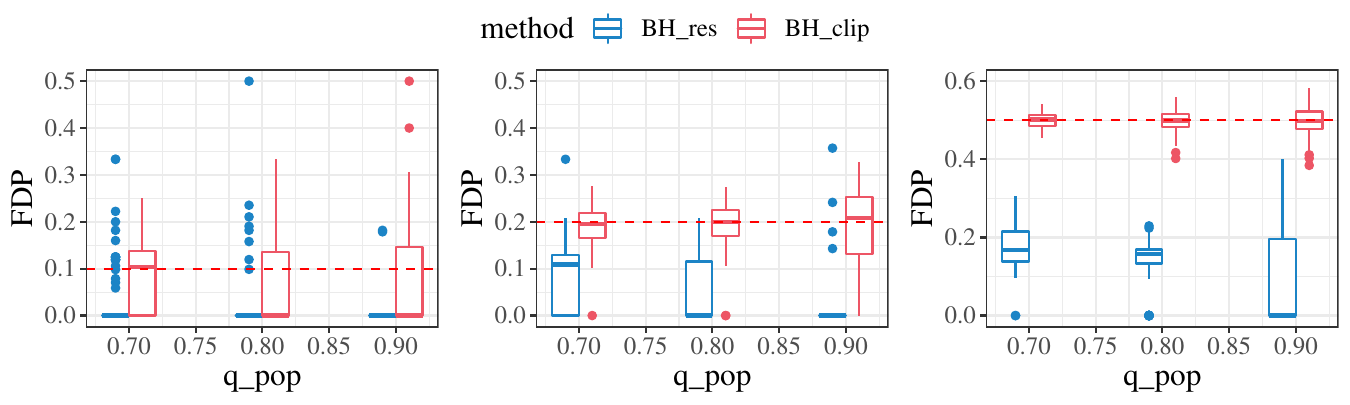}
  \caption{FDP over $N=100$ runs, 
  with $q=0.1$ (left), $q=0.2$ (middle),  $q=0.5$ (right).}
  \label{fig:dti_fdr}
\end{figure}

The power of both methods decrease as $q_{\textrm{pop}}$ increases, 
which is natural because this leads to higher thresholds for binding affinity. 
In all settings, \texttt{BH\_clip} is more powerful and also 
yields larger selection sets. 
Thus, when the thresholds are computable for both the 
calibration and test samples (as Example~\ref{ex:random}), 
we recommend \texttt{BH\_clip} for higher power. 

\begin{figure}[H]
  \centering 
  \includegraphics[width=5.5in]{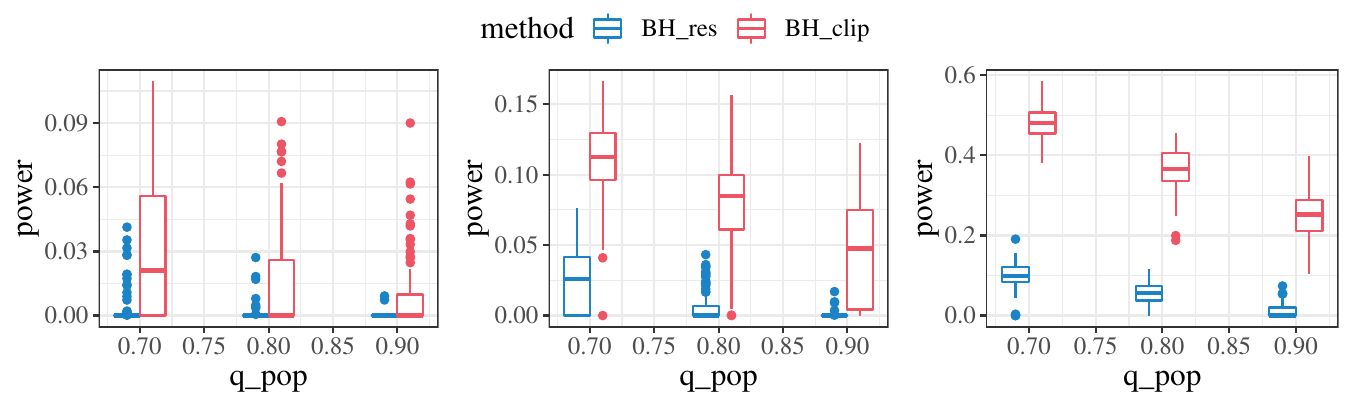}
  \caption{Power over $N=100$ runs, 
  with $q=0.1$ (left), $q=0.2$ (middle), 
  $q=0.5$ (right).}
  \label{fig:dti_power}
\end{figure}

\section{Discussion}

In this paper, we introduce $\cfbh$, which is a generic tool 
to turn any prediction model into a selection threshold for 
interesting outcomes. 
By constructing conformal p-values based on i.i.d.~calibration data 
and leveraging multiple testing ideas, 
we guarantee that a prescribed proportion of the selected set 
is indeed of interest. 
Controlling the false discovery rate ensures 
efficient use of resources for follow-up investigations. 

A crucial condition that $\cfbh$ relies on 
is that the calibration and test samples are i.i.d.~or exchangeable. 
However, in practice, the two datasets might differ 
because of selection or distribution shift. 
For example, to infer the performance of this year's job candidates, 
last years' candidates that are documented 
might  in general be  more competent than 
average; 
to infer new drugs, the 
drugs that have been screened by HTS might be selected with 
varying preference based on the features; 
drug discovery also needs to deal with domain shift (repurposing) 
for completely unseen targets. 
Reliable selection under distribution shift, if not infeasible,  
may require more involved techniques. 

FDR, as a measure of Type-I error, may be limited in applications such as 
healthcare, where both type-I and type-II errors are of concern. 
Therefore, it might also be interesting to see whether our methodology 
can be extended to controlling a mixture of both error types. 
Meanwhile, counting the number of errors may be less sensible if 
the cost of making an error varies with individuals or depends on the outcomes. 
Developing calibration methods to control general risks in screening procedures 
is also an interesting direction to pursue. 

\section*{Acknowledgement}
We are grateful to two anonymous referees and the action editor for valuable comments and suggestions.
We thank John Cherian, Issac Gibbs, 
Jayoon Jang, Lihua Lei, Shuangning Li, Zhimei Ren, 
Hui Xu, and Qian Zhao for helpful discussions. 
Y.J.~would like to specially thank John Cherian for inspiring discussion  
on the applications. 
E.C.~and Y.J.~were supported by the Office of Naval Research grant N00014-20-1-2157, the National Science
Foundation grant DMS-2032014, the Simons Foundation under award 814641, and the ARO grant
2003514594. 

\newpage
\bibliographystyle{plainnat}
\bibliography{reference}

\newpage 
\appendix

\section{Connections to the literature}
\label{app:condition_variant}

This section provides a detailed comparison
of our method
to related ones in the literature.
In Section~\ref{app:sub_variant}, we present
a variant of our method that applies to
classification problems and controls FDR conditional on the test responses.
We then compare it to the score-based methods of~\cite{mary2021semi} and~\cite{rava2021burden}
for classification problems in Section~\ref{app:subsec_classify}.
In Section~\ref{app:sub_outlier}, we show
how the outlier detection problem in~\cite{bates2021testing} with conformal p-values
can be turned into the classification setting
and discuss its connection to our methods.

\subsection{A variant for hypotheses-conditional FDR control}
\label{app:sub_variant}

A variant of Algorithm~\ref{alg:bh}
provides a slightly stronger guarantee, FDR control for
test data conditional on $\cH_0$.
This variant applies to
binary or one-versus-all classification; it also applies to
our directional selection with constant thresholds, because it can be turned into
a classification problem.
The classification setting is close to that of~\cite{rava2021burden},
while our analysis through conformal p-values offers a complementary perspective. 

We assume access to i.i.d.~training data $\{(X_i,L_i)\}_{i=1}^n$
where $L_i$ is the label of unit $i$ and
$X_i\in \cX$ is the features. We also
assume the covariates of
the testing sample $\{X_{n+j}\}_{j=1}^m$ are observed, but not the label.
Suppose one is interested
in finding a subset $\cR$  of test samples whose labels are in some
user-specified class $\cC$,
while controlling the FDR, defined as
\$
\fdr := \EE\Bigg[ \frac{ \sum_{j=1}^m \ind\{L_j\notin \cC, j\in \cR\}}{1\vee|\cR|}\Bigg],
\$
where the expectation is with respect to the randomness in
all the training and testing data.
One could encode a binary label $Y = \ind\{L\in \cC\}$,
which turns it into our setting with $c_j\equiv 0.5$ in~\eqref{eq:Hj}.

Since the label is observable in the calibration fold,
we choose a subset $\cD_\calib^0 = \{i\in \cD_\calib\colon Y_i =0\}$
and let $n^0 = |\cD_\calib^0|$. 
We construct conformal p-values
via
\#\label{eq:pj0}
p_j^0 = \frac{\sum_{i\in \cD_\calib^0} \ind {\{ \hat V_i <\hat{V}_{n+j} \}}+ ( 1+ \sum_{i\in \cD_\calib^0} \ind {\{ \hat V_i = \hat{V}_{n+j} \}}) \cdot U_j   }{n^0+1},
\#
where $U_j\sim \text{Unif}\,[0,1]$ are i.i.d.~random variables.
The construction of p-values in~\eqref{eq:pj0}
differs from~\eqref{eq:pj} in that
we use a smaller calibration set $\cD_\calib^0$, and
compare the test nonconformity scores $\hat{V}_{n+j}$
to $\hat{V}_i = V(X_{i},0)$,
instead of $V_i=V(X_i,Y_i)$ for $i\in \cD_\calib^0$.
We then run BH with $\{p_j^0\}$.
We name this procedure as $\cfbhnull$, which is summarized in Algorithm~\ref{alg:bh0}.

\begin{algorithm}[H]
  \caption{$\cfbhnull$: Selection by prediction with same-class calibration}\label{alg:bh0}
  \begin{algorithmic}[1]
  \REQUIRE Calibration data $\{(X_i,Y_i)\}_{i\in \cD_\calib^0}$,
  test data covariates $\{X_{n+j}\}_{j=1}^m$,
  FDR target $q\in(0,1)$, monotone nonconformity score $V\colon \cX\times\cY\to \RR$.
  \vspace{0.05in}
  \STATE Compute $\hat{V}_i = V(X_i,c_j)$ for $i\in \cD_\calib$ and $\hat{V}_{n+j}= V(X_{n+j},0)$ for $j=1,\dots,m$
  \STATE Construct conformal p-values $\{p_j^0\}_{j=1}^m$ as in~\eqref{eq:pj0} 
  \STATE (BH procedure) Compute $k^* = \max\big\{k\colon \sum_{j=1}^m \ind\{p_j^0\leq qk/m\} \geq k\big\}$
  \vspace{0.05in}
  \ENSURE Selection set $\cR = \{j\colon p_j^0 \leq qk^*/m\}$.
  \end{algorithmic}
\end{algorithm}

Using a slightly different argument,
the following proposition shows the FDR control
of $\cfbhnull$ conditional on the labels of test samples.
The proof is in Appendix~\ref{app:subsec_variant}.

\begin{prop}\label{prop:variant}
Suppose $V$ is monotone, $\cD_\calib$ and test data are i.i.d., and $\cY=\{0,1\}$.
Let $\cD_\calib^0 = \{i\in \cD_\calib\colon Y_i=0\}$.
Given any $q\in [0,1]$, the output of Algorithm~\ref{alg:bh0} satisfies
\$
\EE\Bigg[  \frac{\sum_{j=1}^m \ind\{j\in \cR, Y_{n+j}= 0\}}{1\vee|\cR|}  \Bigggiven  \{Y_{n+j}\}_{j=1}^m \Bigg] \leq q.
\$
\end{prop}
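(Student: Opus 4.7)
The plan is to reduce Proposition~\ref{prop:variant} to the standard conformal p-value analysis by conditioning on a large enough $\sigma$-algebra so that (i) the null index set becomes deterministic and (ii) the relevant nonconformity scores become exchangeable. Concretely, I would condition on both the test labels $\{Y_{n+j}\}_{j=1}^m$ and the calibration labels $\{Y_i\}_{i\in \cD_\calib}$. Under this conditioning, the null set $\cH_0=\{j\colon Y_{n+j}=0\}$ and the subset $\cD_\calib^0$ are deterministic, and by the i.i.d.~assumption the features $\{X_i\colon i\in \cD_\calib^0\}\cup \{X_{n+j}\colon j\in \cH_0\}$ are i.i.d.~from the conditional law $\PP_{X\given Y=0}$.

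Next, I would use this conditional exchangeability to analyze the p-values. Since $\hat V_i = V(X_i,0)$ for $i\in \cD_\calib^0$ and $\hat V_{n+j}=V(X_{n+j},0)$, the scores $\{\hat V_i\colon i\in \cD_\calib^0\}\cup\{\hat V_{n+j}\colon j\in \cH_0\}$ are conditionally i.i.d. In particular, for each $j\in \cH_0$, $p_j^0$ is a \emph{genuine} conformal p-value (not merely a conservative surrogate as in Lemma~\ref{lem:key}), so the usual rank argument with the independent $\textrm{Unif}(0,1)$ tie-breaker $U_j$ gives $p_j^0\given \text{labels} \sim \textrm{Unif}[0,1]$ exactly. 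This sidesteps the $p_j\geq p_j^*$ step that was needed in Theorem~\ref{thm:exchangeable}, because on the null event $\hat V_{n+j}$ already equals $V(X_{n+j},Y_{n+j})$.

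I would then invoke the PRDS property of conformal p-values established in~\cite{bates2021testing}: conditional on the labels, the vector $(p_1^0,\dots,p_m^0)$ is PRDS on $\cH_0$. The argument is the same as in~\cite{bates2021testing} because, once the conditioning is in place, one is in the standard conformal setting with an exchangeable calibration set drawn from $\PP_{X\given Y=0}$ and test points in $\cH_0$ drawn from the same conditional law. With conditional PRDS and conditional uniformity of $\{p_j^0\colon j\in \cH_0\}$, the Benjamini--Yekutieli argument (or, equivalently, the leave-one-out swap analysis that underlies Theorem~\ref{thm:exchangeable} and is detailed in Appendix~\ref{app:subsec_bh}) yields
\#
\EE\bigg[\frac{\sum_{j\in \cH_0}\ind\{j\in \cR\}}{1\vee |\cR|} \,\bigg|\, \{Y_i\}_{i\in \cD_\calib},\,\{Y_{n+j}\}_{j=1}^m\bigg] \leq q\cdot \frac{|\cH_0|}{m}\leq q.
\#
Taking a further conditional expectation over the calibration labels (given the test labels) gives the claimed bound.

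The main obstacle I anticipate is bookkeeping around the conditional PRDS statement: one has to verify that conditioning on the calibration labels does not break the PRDS structure, and that the tie-breaker $U_j$ interacts correctly with the rank argument so that uniformity holds exactly rather than just stochastic dominance. Everything else is a direct transcription of arguments available in~\cite{bates2021testing} and in the proof of Theorem~\ref{thm:exchangeable}, specialized to the simpler situation where $\hat V_{n+j}=V_{n+j}$ on $\cH_0$ so that the conservative inequality in Lemma~\ref{lem:key} is replaced by an equality. Monotonicity of $V$ plays no essential role here and is stated mainly for consistency with the rest of the paper.
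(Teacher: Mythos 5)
Your proposal is correct and follows essentially the same route as the paper: condition on all calibration and test labels so that $\cH_0$ and $\cD_\calib^0$ become deterministic and the null scores reduce to genuine (exactly uniform) conformal p-values over an i.i.d.\ sample from $\PP_{X\given Y=0}$, then apply the PRDS property from~\cite{bates2021testing} together with the leave-one-out BH analysis of Theorem~\ref{thm:exchangeable} to get the conditional bound $q|\cH_0|/m\leq q$. Your observation that the conservative inequality of Lemma~\ref{lem:key} degenerates to an equality on the null (so monotonicity is not essential) matches what happens implicitly in the paper's argument.
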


\subsection{Connection to classification methods}
\label{app:subsec_classify}

A recent paper~\citep{rava2021burden}
considers controlling the mis-classification rate
(called FSR, the false selection rate), a similar notion as our FDR,
for certain pre-specified subgroups
in classification problems.
Given any model $\hat{s}(x)$ that predicts
how likely the label of a sample
with feature value $x$ is in a specified class $\cC$, they use
an independent set of
data from class $\cC$ to calibrate  a threshold $\hat{s}\in \RR$,
and classify all the test samples
with $\hat{s}(X_{n+j})\geq \hat{s}$ into $\cC$.
Using a martingale argument, they show that
the proportion of mis-classified units among the detected ones
is controlled below a pre-specified level.
Also related are~\cite{mary2021semi} and~\cite{roquain2022false} that
consider testing whether the test data follow a ``null'' distribution
and provide similar analysis.

Without the
subgroup fairness aspect,
the procedure in~\cite{rava2021burden}
is equivalent to our $\cfbhnull$ 
if we set $\hat{V}(x,0)=\hat{s}(x)$.
Our analysis for Proposition~\ref{prop:variant}
is an alternative to the martingale argument of~\cite{rava2021burden,mary2021semi,roquain2022false}.
We  note that both $\cfbh$ and $\cfbhnull$  can be extended to account for subgroup fairness
by applying the selection procedure separately to different subgroups.

While providing a stronger guarantee
(i.e., conditional on the hypotheses),
$\cfbhnull$ 
is less general than $\cfbh$.
$\cfbh$  applies to
problems that cannot be easily translated
to a classfication problem (see, e.g.,
Examples~\ref{ex:missing} and~\ref{ex:counterfactual}).
Moreover, in classification problems where $\cfbhnull$  is applicable,
$\cfbh$  can be
more powerful with a
suitable choice of $V$. We discuss this issue in the following.

\begin{remark}[Power comparison in classification problems] \normalfont
Suppose $Y\in\{0,1\}$ and the goal is to
find $Y=1$. We set $V(x,y) = My - \hat{s}(x)$ as discussed
in Section~\ref{subsec:asymp},
where $\hat{s}(X_i)$ is any score
to predict how likely $Y_i=1$ happens conditional on $X_i$ (such as those used in~\cite{rava2021burden}), and $M$ is a sufficiently large
constant. We suppose $M>  2\sup_x|\hat{s}(x)| $,
for which often $M=2$ suffices.
Since $M>2\sup_x|\hat{s}(x)|$,
for those $i\notin \cD_\calib^0$, i.e., $Y_i=1$, we have
$
V(X_i,Y_i) = M - \hat{s}(X_i) > - \hat{s}(X_{n+j}) = \hat{V}_{n+j}.
$
Thus our conformal p-values used in $\cfbh$  reduce to
\$
p_j 
&= \frac{\sum_{i\in \cD_\calib^0} \ind\{V_i < \hat{V}_{n+j}\} + U_j \cdot( 1+ \sum_{i\in \cD_\calib^0} \ind\{V_i = \hat{V}_{n+j}\}) }{n+1} ,
\$
while the p-values in $\cfbhnull$  are
\$
p_j^0 = \frac{\sum_{i\in \cD_\calib^0} \ind\{V_i < \hat{V}_{n+j}\} + U_j \cdot( 1+ \sum_{i\in \cD_\calib^0} \ind\{V_i = \hat{V}_{n+j}\}) }{|\cD_\calib^0|+1}.
\$
That is, in classification problems, such a choice of nonconformity score  leads to
\$
p_j = \frac{|\cD_\calib^0|+1}{n+1} p_j^0 < p_j^0.
\$
With strictly smaller p-values, 
the rejection set of $\cfbh$  is a superset
of that of $\cfbhnull$,
hence achieving strictly higher power.
Also, the coefficient $\frac{|\cD_\calib^0|+1}{n+1}$
implies that the smaller the proportion of $Y_i=0$ samples
is among the calibration data,
the greater the power gain of $\cfbh$  over $\cfbhnull$.
\end{remark}

\subsection{Connection to outlier detection}
\label{app:sub_outlier}

The outlier detection problem in~\cite{bates2021testing}
can also be turned into
the above classification setting.
%
As stated in Lemma~\ref{lem:outlier},~\cite{bates2021testing}
studies a problem where given i.i.d.~calibration data $\{Z_i\}_{i=1}^n$
from an unknown  distribution $\PP$,
one would like to test for outliers in independent
test samples $\{Z_{n+j}\}_{j=1}^m$,
and the hypotheses are $H_j\colon Z_{n+j}\sim \PP$.
There, $Z_{n+j}$ is called an \emph{inlier} if $H_j$ is true,
and an \emph{outlier} otherwise;
the calibration data are all inliers.

To turn the outlier detection problem into our language,
one could view $Z$ as the covariate,
and encode a label $Y=0$ for inliers and $Y=1$ for outliers.
This setup is the same as the preceding subsection,
where one would like to control the average proportion of
inliers in those classified as $Y=1$.
Here, the inlier distribution
is $\PP_{Z\given Y=0}$;
it is then equivalent to
the hypothesis testing problem in~\cite{mary2021semi} and \cite{roquain2022false} as well,
where data associated with null hypotheses are i.i.d.~from a \emph{null distribution}.
The method in~\cite{bates2021testing} for marginal FDR control
is the same as $\cfbhnull$  if we set
$V(x,0)=-\hat{s}(x)$ for the one-class-classifier $\hat{s}(\cdot)$
trained on inliers
for outlier detection in~\cite{bates2021testing}.
Similar to $\cfbhnull$,
only inliers are used for calibration in~\cite{bates2021testing},
and the FDR control is conditional on the test labels.

Besides the differences in
the methodology and the FDR control guarantee,
another distinction between $\cfbh$  and that of~\cite{bates2021testing}
is the assumption on the data distributions.
In~\cite{bates2021testing} (and also~\cite{mary2021semi,roquain2022false}),
it is only assumed that the inliers
in the test samples are i.i.d.~from $\PP_{Z\given Y=0}$,
but the outliers can be arbitrary distributed and are
not necessarily from the same distribution.
In contrast, $\cfbh$  assumes all $(Z_i,Y_i)$ pairs
in the calibration and test data are from an i.i.d.~super-population.
Leveraging this additional structure,
$\cfbh$ can deal with more
general thresholds for continuous responses, see Examples~\ref{ex:missing} and~\ref{ex:counterfactual}, and
achieve higher power in classification problem
by including all observations in the calibration fold as
we discussed in Section~\ref{app:subsec_classify}.
This super-population assumption can be reasonable
in many cases, such as healthcare diagnosis, job hiring, and
drug discovery.

At a high level,
responses of higher values in our setting
can be
roughly viewed as ``outliers''.
However, many applications such as job hiring
and drug discovery may not be easily turned into an outlier detection
problem. The one-class-classification in~\cite{bates2021testing} may also be insufficient
in such applications when information from
both positive and negative training samples
is available. In contrast,
$\cfbh$  is able to use any prediction model
from an independent training process.

\section{Technical proofs}

\subsection{Proof of Lemma~\ref{lem:key}}
\label{app:subsec_keylem}
Recall that
$V_{n+j}=V(X_{n+j},Y_{n+j})$ is  the unobserved score for the $j$-th test sample, and
the oracle p-value is
\#\label{eq:app_orc_pval}
p_j^* = \frac{\sum_{i=1}^n \ind {\{V_i < {V}_{n+j} \}}+  ( 1+ \sum_{i=1}^n \ind{\{V_i =  {V}_{n+j} \}} ) \cdot U_j }{n+1},
\#
where $U_j$ is the same as in $p_j$.

\begin{proof}[Proof of Lemma~\ref{lem:key}]
The first property follows from the monotonicity of $V$.
To be specific, on the event $\{Y_{n+j}\leq c_j\}$,
we have $V_{n+j} = V(X_{n+j},Y_{n+j})\leq V(X_{n+j},c_j) = \hat V_{n+j}$ hence $p_j^*\leq p_j$.

The second property follows from an application of
the PRDS property proved in~\cite{bates2021testing}.
In the following lemma,
we cite the results from~\cite{bates2021testing}.

\begin{lemma}[PRDS property in~\cite{bates2021testing}]
  \label{lem:outlier}
Let $Z_1,\dots,Z_n\iid \PP$ be the calibration data,
and $Z_{n+1},\dots,Z_{n+m}$ be independent test samples.
For each $j=1,\dots,m$, the null hypothesis is $H_j^*\colon Z_{n+j}\sim \PP$.
For any fixed nonconformity score $V\colon \cZ\to \RR$,
we compute $V_i=V(Z_i)$ for $1\leq i\leq m+n$,
and construct p-values $\{p_j^*\}$ as in~\eqref{eq:app_orc_pval}.
If $H_j^*$ is null, i.e., if $Z_{n+j}\sim \PP$, then
the vector $(p_1^*,\dots,p_m^*)$ is PRDS on $p_j^*$.
\end{lemma}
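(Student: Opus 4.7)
The plan is to represent every coordinate of $(p_1^*,\dots,p_m^*)$ as a non-decreasing measurable function of a single latent variable, and then invoke the standard fact that such a monotone representation implies PRDS. Under the null $Z_{n+j}\sim\PP$, the scores $V_1,\dots,V_n,V_{n+j}$ are exchangeable. I would fix the $\sigma$-algebra $\cG$ generated by the multiset $S=\{V_1,\dots,V_n,V_{n+j}\}$, the other test scores $\{V_{n+\ell}\}_{\ell\neq j}$, and the tie-breakers $\{U_\ell\}_{\ell=1}^m$. Conditionally on $\cG$ the only remaining randomness is the rank $K\in\{1,\dots,n+1\}$ of $V_{n+j}$ in $S$, which is uniform by exchangeability; since this uniform distribution does not depend on $\cG$, the oracle p-value $p_j^*$ is in fact independent of $\cG$ and, marginally, uniform on $[0,1]$.

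Next I would verify that, conditionally on $\cG$, both $p_j^*$ and each $p_\ell^*$ (for $\ell\neq j$) are non-decreasing functions of $K$. Monotonicity of $p_j^*$ in $K$ is essentially by construction, since $\sum_{i=1}^n \ind\{V_i<V_{n+j}\}$ counts elements of $S$ below the $K$-th order statistic, and the tie count combines with the fixed $U_j$ in the usual way. For $\ell\neq j$, increasing $K$ by one swaps the element removed from $S$ for a weakly larger one, so for any fixed $V_{n+\ell}$ the strict-inequality count $\sum_i \ind\{V_i<V_{n+\ell}\}$ weakly increases; a short case analysis on whether the two swapped elements equal $V_{n+\ell}$ shows that the combined quantity $\sum_i \ind\{V_i<V_{n+\ell}\} + (1+\sum_i \ind\{V_i=V_{n+\ell}\})U_\ell$ is also non-decreasing in $K$, so $p_\ell^*$ is.

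With both ingredients in place, the PRDS conclusion is routine. Using $p_j^*\perp\cG$, one rewrites $\PP((p_1^*,\dots,p_m^*)\in D\mid p_j^*=x)$ as an expectation over $\cG$ of $\ind\{\vec g(K(x,\cG))\in D\}$, where $\vec g$ is the coordinate-wise non-decreasing map from the previous step and $K(x,\cG)$ is the value (or range of values) of $K$ consistent with $p_j^*=x$. Because this map is itself non-decreasing in $x$ and $D$ is increasing, the integrand is non-decreasing in $x$ for every outcome of $\cG$, and integrating over $\cG$ preserves the monotonicity, yielding the required PRDS property on the single index $j$.

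The hard part is the interplay between the tie count and the tie-breaker $U_\ell$: without ties the representation $p_\ell^*=g_\ell(K)$ is strictly increasing and everything trivializes, but with ties both $\sum_i \ind\{V_i<V_{n+\ell}\}$ and $\sum_i \ind\{V_i=V_{n+\ell}\}$ can shift simultaneously when $K$ changes, so the monotonicity of $p_\ell^*$ must be verified case by case. A related nuisance is that when $p_j^*$ has flat regions in $K$, the conditioning $\{p_j^*=x\}$ corresponds to multiple values of $K$; in that case one passes through the equivalent formulation of PRDS using the lower level sets $\{p_j^*\le x\}$ and a monotone coupling of these sets, which is the standard workaround and does not affect any of the steps above.
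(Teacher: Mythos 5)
The paper never proves this lemma: it is imported from the cited reference (the proof of Lemma~\ref{lem:key} explicitly says it is ``citing'' the result), so there is no in-paper argument to compare yours against. Your reconstruction does follow the same route as the known proof of this fact: condition on the unordered multiset $S=\{V_1,\dots,V_n,V_{n+j}\}$ together with the remaining test scores, observe that under $H_j^*$ the only residual randomness is which element of $S$ plays the role of $V_{n+j}$ (uniform by exchangeability), and check that every coordinate of the p-value vector is a non-decreasing function of that rank $K$. Your monotonicity computations are correct, including the tie case, where the increment of $p_\ell^*$ when $K$ crosses the value $V_{n+\ell}$ is $1-U_\ell\ge 0$ or $U_\ell\ge 0$ depending on which of the two swapped elements equals $V_{n+\ell}$. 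The flat-region worry in your last paragraph is actually vacuous: if $V_{(K)}=V_{(K+1)}$ then both the realized value of $V_{n+j}$ and the leave-one-out calibration multiset are unchanged across the tied positions, so every coordinate of the p-value vector is constant there and no level-set workaround is needed.

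The one genuine error is the assertion that $p_j^*$ is independent of $\mathcal{G}$. With $U_j$ included in $\mathcal{G}$ this is false: conditionally on $(S,U_j)$ the p-value $p_j^*$ is supported on at most $n+1$ points whose locations depend on $U_j$, so $p_j^*$ and $U_j$ are dependent. The inference ``the law of $K$ given $\mathcal{G}$ does not depend on $\mathcal{G}$, hence $p_j^*\perp\mathcal{G}$'' does not follow, because $p_j^*$ is a function of $\mathcal{G}$ (through $S$ and $U_j$) as well as of $K$; consequently the rewriting of $\PP(\mathbf{p}\in D\mid p_j^*=x)$ as a plain expectation over $\mathcal{G}$ is unjustified as stated. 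The clean fix is to drop $U_j$ from $\mathcal{G}$: conditionally on $S$ alone, marginalizing over $U_j$, the p-value $p_j^*$ is exactly uniform on $[0,1]$ and independent of the reduced $\sigma$-algebra, and given the reduced $\mathcal{G}$ the value of $p_j^*$ determines $V_{n+j}$ monotonically, hence determines every $p_\ell^*$ with $\ell\neq j$ (none of which involves $U_j$) as a non-decreasing deterministic function of $p_j^*$. PRDS then follows by integrating the indicator of the increasing set over the reduced $\mathcal{G}$. So the architecture of your argument is right and matches the standard proof, but the independence step must be repaired in this way before the final integration is legitimate.
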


We now construct another set of `oracle p-values'
in the setup of~\cite{bates2021testing} that coincide with our $\{p_j\}$.
Let $j$ be any fixed index.
We keep $p_j^*$ as it is, and
let $Z_{n+\ell}^* = (X_{n+\ell},c_\ell)$ for all $\ell \neq j$.
Without loss of generality we define $Z_{n+j}^*=(X_{n+j},Y_{n+j})$
and $Z_i^* = (X_i,Y_i)$ for $i=1,\dots,n$.
This can be viewed as the $(X_{n+\ell},Y_{n+\ell})$ pair
by setting $Y_{n+\ell}=c_\ell$ a.s.~to be an ``outlier'' for all $\ell\neq j$.
Then we let $\tilde{V}_{i} = V(Z_{i}^*)$ for $1\leq i\leq n+m$, and
construct p-values using the same $\{U_\ell\}$ as in~\eqref{eq:app_orc_pval}:
\#\label{eq:app_orc_pval_2}
\tilde p_\ell^* = \frac{\sum_{i=1}^n \ind {\{ \tilde{V}_i < \tilde{V}_{n+\ell} \}}+  ( 1+ \sum_{i=1}^n \ind{\{\tilde{V}_i =  \tilde{V}_{n+\ell} \}}) \cdot U_\ell }{n+1},\quad \quad \text{for all }\ell=1,\dots,m.
\#
Note that $\tilde{p}_\ell^* = p_\ell$ for $\ell\neq j$
and $\tilde{p}_j^* = p_j^*$,
where $p_\ell$ is our new conformal p-values in~\eqref{eq:pj}.
In the view of Lemma~\ref{lem:outlier},
$Z_1^*,\dots,Z_n^*$ are the i.i.d.~calibration data,
$Z_{n+1}^*,\dots,Z_{n+m}^*$ are independent test samples,
and $Z_{n+j}^*$ follows the same distribution as $Z_1^*,\dots,Z_n^*$.
Hence by Lemma~\ref{lem:outlier}, we know
the vector $(\tilde p_1^*,\dots, \tilde p_m^*)$ is PRDS on $p_j^*$,
which is equivalent to the second property of Lemma~\ref{lem:key}.
This concludes the proof of Lemma~\ref{lem:key}.
\end{proof}

\subsection{Proof of Theorem~\ref{thm:exchangeable}}
\label{app:subsec_bh}
\begin{proof}[Proof of Theorem~\ref{thm:exchangeable}] 
Let $\cR$ be the rejection set, and let $R_j=\ind\{j\in \cR\}$ for $j=1,\dots,m$.
In the BH procedure,
$j\in \cR$ if and only if $p_j \leq q|\cR|/m$.
The FDR can thus be decomposed as
\$
\fdr &= \EE\bigg[  \frac{\sum_{j=1}^m \ind {\{Y_{n+j}\leq c_j\}}R_j}{1\vee \sum_{j=1}^m R_j}   \bigg]
=   \sum_{j=1}^m \sum_{k=1}^m \frac{1}{k} \EE\big [  \ind {\{|\cR|=k\}} \ind {\{Y_{n+j}\leq c_j\}}\ind {\{p_j\leq q k/m\}}    \big].
\$
Let $\cR_{j\to *}$ be the rejection set obtained by setting $p_j$ to $p_j^*$
while keeping others fixed.
By Lemma~\ref{lem:key}, we have $p_j\leq p_j^*$
on the event $\{Y_{n+j}\leq c_j\}$.
In addition, by the property of the BH procedure,
if $j\in \cR$, i.e., if $p_j \geq q|\cR|/m$,
then sending $p_j$ to a smaller value does not change the rejection set.
Thus,
\$
\ind {\{|\cR|=k\}} \ind {\{Y_{n+j}\leq c_j\}}\ind {\{p_j\leq q k/m\}}
&= \ind {\{|\cR_{j\to *}|=k\}} \ind {\{Y_{n+j}\leq c_j\}}\ind {\{p_j\leq q k/m\}}   \\
&\leq \ind {\{|\cR_{j\to *}|=k\}} \ind {\{Y_{n+j}\leq c_j\}}\ind {\{p_j^*\leq q k/m\}}.
\$
Therefore, the FDR is bounded as
\$
\fdr
&\leq \sum_{j=1}^m \sum_{k=1}^m \frac{1}{k} \EE\big [  \ind {\{|\cR_{j\to *}|=k\}} \ind {\{Y_{n+j}\leq c_j\}}\ind {\{p_j^*\leq q k/m\}}    \big] \\
&\leq \sum_{j=1}^m \sum_{k=1}^m \frac{1}{k} \EE\big [  \ind {\{|\cR_{j\to *}|=k\}}  \ind {\{p_j^*\leq q k/m\}}    \big].
\$
By Lemma~\ref{lem:key},
the vector of p-values $(p_1,\dots,p_{j-1},p_j^*,p_{j+1},\dots,p_m)$
is PRDS~\citep{bates2021testing} on $p_j^*$. 
Thus, following standard proofs for the BH($q$) procedure
under PRDS condition~\cite{benjamini2001control}, each term can be controlled as
\$
\sum_{k=1}^m \frac{1}{k} \EE\big [  \ind {\{|\cR_{j\to *}|=k\}} \ind {\{p_j^*\leq q k/m\}}    \big] \leq \frac{q}{m},
\$
which completes the proof of Theorem~\ref{thm:exchangeable}.
\end{proof}

\subsection{Proof of Theorem~\ref{thm:determ}}
\label{app:determ}

\begin{proof}[Proof of Theorem~\ref{thm:determ}]
For notational simplicity, set
$p_j = p_j^\dtm$ \eqref{eq:pj*} in this proof only.
Also define the corresponding deterministic oracle p-values
\#\label{eq:pj*_determ}
p_j^* = \frac{1+\sum_{i=1}^n \ind\{V_{i}< V_{n+j}\}}{n+1},
\#
and let this notation override~\eqref{eq:pj*} in this proof only.

For any $j=1,\dots,m$, define a set of slightly modified p-values
\#\label{eq:pj_ell}
p_\ell^{(j)} = \frac{\sum_{i=1}^n \ind\{V_i < \hat{V}_{n+\ell}\} + \ind\{V_{n+j}<\hat{V}_{n+\ell}\}}{n+1},\quad \forall ~ \ell \neq j.
\#
These p-values are only used in our analysis
(our method cannot use them since they cannot be computed from the observations).
Also define $\cR(a_1,\dots,a_m)\subseteq\{1,\dots,m\}$ as the rejection (indices) set
obtained by the BH procedure, from p-values taking on the values $a_1,\dots,a_m$.

Recall that the output of Algorithm~\ref{alg:bh} is 
$\cR = \cR(p_1,\dots,p_m)$. 
In the sequel, we will compare $\cR$ to
\$
\cR(p_1^{(j)},\dots,p_{j-1}^{(j)},p_j^*,p_{j+1}^{(j)},\dots,p_m^{(j)})
\$
on the event $\{Y_{n+j} \leq c_j, j\in \cR\}$.
First, on this event, since $V$ is monotone, we have
$V_{n+j}=V(X_{n+j},Y_{n+j})\leq V(X_{n+j},c_j)$, whence $p_j^*\leq p_j$.
For the remaining p-values,
since the scores have no ties, we consider two cases:
\begin{enumerate} 
\item[(i)] If $\hat{V}_{n+\ell}>\hat{V}_{n+j}$,
then $\hat{V}_{n+\ell} > V_{n+j}$ since $\hat{V}_{n+j} > V_{n+j}$.
This means 
\$
p_\ell^{(j)}=\frac{1+\sum_{i=1}^n \ind\{V_i < \hat{V}_{n+\ell}\} }{n+1}= p_\ell.
\$
\item[(ii)] If $\hat{V}_{n+\ell}<\hat{V}_{n+j}$,
then $p_\ell \leq p_j$. Since $j\in \cR$,
the BH procedure implies $\ell\in \cR$.
By definition,  we have
\$
p_\ell^{(j)} \leq \frac{1+ \sum_{j=1}^n \ind\{V_i < \hat{V}_{n+\ell}\}}{n+1}
\leq \frac{1+ \sum_{j=1}^n \ind\{V_i < \hat{V}_{n+j}\}}{n+1} = p_j.
\$
\end{enumerate}
To summarize, 
suppose we are to replace $p_j$ by $p_j^*$ 
and $p_\ell$ by $p_\ell^{(j)}$ for all $\ell\neq j$. 
Then on the event $\{Y_{n+j}\leq c_j,j\in \cR\}$,  
such a replacement does not change any 
of those $p_\ell\geq p_j$; also, 
all those $p_\ell\leq p_j$ including $p_j$ itself (they are rejected in $\cR$) 
are still no greater than 
$p_j$ after the replacement. 
Thus, by the step-up nature 
of the BH procedure, 
such a replacement does not change the rejection set, 
meaning that
\$
\cR  &=  \cR( p_1^{(j)},\dots, p_{j-1}^{(j)}, p_j, p_{j+1}^{(j)},\dots, p_m^{(j)} )\\
&= \cR( p_1^{(j)},\dots, p_{j-1}^{(j)}, p_j^*, p_{j+1}^{(j)},\dots, p_m^{(j)} )=: \cR_{j}^* 
\$
on the event $\{Y_{n+j}\leq c_j,j\in \cR\}$.
As in the proof of Theorem~\ref{thm:exchangeable},
a leave-one-out analysis of the FDR then implies
\$
\fdr &= \EE\bigg[  \frac{\sum_{j=1}^m \ind {\{Y_{n+j}\leq c_j\}}R_j}{1\vee \sum_{j=1}^m R_j}   \bigg] \\
&=  \sum_{j=1}^m \sum_{k=1}^m \frac{1}{k} \EE\big [  \ind {\{|\cR|=k\}} \ind {\{Y_{n+j}\leq c_j\}}\ind {\{p_j\leq q k/m,j\in \cR\}}    \big] \\
&\leq \sum_{j=1}^m \sum_{k=1}^m \frac{1}{k} \EE\big [  \ind {\{|\cR_{j}^*|=k\}} \ind {\{Y_{n+j}\leq c_j\}}\ind {\{p_j^*\leq q k/m\}}    \big] \\
&\leq \sum_{j=1}^m \sum_{k=1}^m \frac{1}{k} \EE\big [  \ind {\{|\cR_{j}^*|=k\}} \ind {\{p_j^*\leq q k/m\}}    \big] \\
&= \sum_{j=1}^m \sum_{k=1}^m \frac{1}{k} \EE\big [  \ind {\{|\cR_{j}^*|=k\}} \ind {\{p_j^* \in \cR_j^*\}}    \big];
\$
the second and the last lines use the property of the BH procedure, whereas
the third uses the facts stated just above. 
By the step-up nature of the 
BH procedure, we know that on the event $\{p_j^*\in \cR_j^*\}$,
sending $p_j^*$ to zero does not change the rejection set, i.e., we have
\$
\cR_{j}^* = \cR(p_1^{(j)},\dots,p_{j-1}^{(j)},0,p_{j+1}^{(j)},\dots,p_m^{(j)})=: \cR_{j\to 0}^*.
\$
Thus
\$
\fdr
\leq \sum_{j=1}^m \sum_{k=1}^m \frac{1}{k} \EE\big [  \ind {\{|\cR_{j}^*|=k\}} \ind {\{p_j^* \in \cR_{j\to 0}^*\}}    \big] = \sum_{j=1}^m  \EE\bigg [ \frac{\ind\{p_j^*\leq q|\cR_{j\to 0}^*|/m\}}{1\vee |\cR_{j\to 0}^*|}   \bigg].
\$

Note that by definition, $\{p_\ell^{(j)}\}_{\ell\neq j}$
is invariant after permuting $\{V_i\}_{i=1}^n \cup \{V_{n+j}\}$.
Since $\{V_i\}_{i=1}^n \cup \{V_{n+j}\}$ are exchangeable conditional on
$\{\hat{V}_{n+\ell}\}_{\ell\neq j}$, we know that
the distribution of $\{p_\ell^{(j)}\}_{\ell\neq j}$ is
independent of $\{V_i\}_{i=1}^n \cup \{V_{n+j}\}$ conditional on
the unordered set $[V_1,\dots,V_n,V_{n+j}]$.
Also note that $\cR_{j\to 0}^*$ only depends on $\{p_j^{(\ell)}\}_{\ell\neq j}$,
and $p_j^*$ only depends on $\{V_i\}_{i=1}^n \cup \{V_{n+j}\}$. This implies  that
$\cR_{j\to 0}^*$ is
independent of $p_j^*$ conditional on  the unordered set $[V_1,\dots,V_n,V_{n+j}]$.
The tower property yields
\$
\EE\bigg [ \frac{\ind\{p_j^*\leq q|\cR_{j\to 0}^*|/m\}}{1\vee |\cR_{j\to 0}^*|}   \bigg]
= \EE\Bigg[\EE\bigg [ \frac{\ind\{p_j^*\leq q|\cR_{j\to 0}^*|/m\}}{1\vee |\cR_{j\to 0}^*|} \bigggiven  [V_1,\dots,V_n,V_{n+j}] \bigg]\Bigg].
\$
In addition, since the variables $\{V_i\}_{i=1}^n\cup \{V_{n+j}\}$ are
conditionally exchangeable, they are also marginally exchangeable.
Thus, for any random variable 
$t\in \RR$ that is measurable with respect to 
the unordered set 
$[V_1,\dots,V_n,V_{n+j}]$, we have 
\$
\PP\big(p_j^*\leq t\biggiven [V_1,\dots,V_n,V_{n+j}]\big) \leq t.
\$
This gives
\$
\EE\bigg [ \frac{\ind\{p_j^*\leq q|\cR_{j\to 0}^*|/m\}}{1\vee |\cR_{j\to 0}^*|} \bigggiven  [V_1,\dots,V_n,V_{n+j}]  \bigg]\leq \frac{q}{m}.
\$
Summing over $j\in\{1,\dots,m\}$ concludes the proof. 
\end{proof}

\subsection{Proof of Proposition~\ref{prop:asymptotic}}
\label{app:prop_asymp}

\begin{proof}[Proof of Proposition~\ref{prop:asymptotic}]
  We utilize an equivalent representation of the BH($q$) procedure,
communicated in~\cite{storey2004strong}:
the rejection set
is $\cR = \{j\colon p_j \leq \hat\tau\}$, where
\#\label{eq:BH_hat_tau}
\hat\tau = \sup \bigg\{ t\in [0,1]\colon \frac{mt}{\sum_{j=1}^m \ind\{p_j \leq t\}} \leq q  \bigg\}.
\#
To clarify the dependence on the calibration data, we denote the p-values as
$
p_j = \hat{F}_n( V_{j}^{0}, U_j),
$
where for simplicity we denote $V_j^0 = \hat{V}_{n+j}=V(X_{n+j},0)$, and define
\$
\hat{F}_n(v,u ) = \frac{\sum_{i=1}^n \ind {\{V_i < v \}}+  ( 1 + \sum_{i=1}^n  \ind {\{V_i = v \}})\cdot u }{n+1}
\$
for any $(v,u )\in \RR\times[0,1] $.
We know that $\{(V_j^0,U_j )\colon 1\leq j\leq m\}$
are i.i.d., and independent of $\cD_\calib$. 
We first  define
\#\label{eq:def_F}
F(v,u) = \PP\big(V(X,Y)<v) + \PP\big(V(X,Y)=v\big)\cdot u.
\#
Then by the uniform law of large numbers, we have
\#\label{eq:ulln}
\sup_{v\in \RR, u\in [0,1] } \big|  \hat{F}_n(v,u ) - F(v,u) \big| ~ \asto ~ 0,\quad
\text{as}~ n\to \infty.
\#

We then
repeatly employ the (uniform) strong law of large numbers to show the asymptotic behavior
of the testing procedure.
Based on~\eqref{eq:ulln}, we show the uniform convergence of the criterion
in~\eqref{eq:BH_hat_tau}.

\begin{lemma}\label{lem:tau_conv}
With the same setup as in the proof of
Proposition~\ref{prop:asymptotic}, suppose $\sup_{x\in \cX}w(x) \leq M$ for some constant $M>0$.
Then
\$
\sup_{t\in[0,1]} \bigg| \frac{1}{m} \sum_{j=1}^m \ind\{\hat{F}_n(V_j^0,U_j )\leq t\}
- \PP\big( F(V_j^0, U_j) \leq t\big)  \bigg| ~\asto ~ 0,
\$
as $m,n\to \infty$,
where $\PP$ is taken with respect to $V_j^0 = V(X_{n+j},0)$ and an independent $U_j\sim$ Unif$[0,1]$.
\end{lemma}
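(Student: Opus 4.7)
\textbf{Proof proposal for Lemma~\ref{lem:tau_conv}.}
The plan is to decompose the supremum via the triangle inequality into (i) a discrepancy coming from replacing the empirical map $\hat F_n$ by its limit $F$ inside the indicator, and (ii) a classical empirical-process gap for iid random variables of the form $F(V_j^0,U_j)$. Concretely, define
\$
H_m(t) = \frac{1}{m}\sum_{j=1}^m \ind\{\hat F_n(V_j^0,U_j)\leq t\},\quad
\tilde H_m(t) = \frac{1}{m}\sum_{j=1}^m \ind\{F(V_j^0,U_j)\leq t\},\quad
H(t) = \PP(F(V_1^0,U_1)\leq t).
\$
Then $\sup_t |H_m(t)-H(t)|\leq \sup_t|H_m(t)-\tilde H_m(t)|+\sup_t|\tilde H_m(t)-H(t)|$.

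The second term is handled by the classical Glivenko-Cantelli theorem. The variables $\{(V_j^0,U_j)\}_{j=1}^m$ are iid and, crucially, are independent of the calibration data $\cD_\calib$ on which $F$ does not depend. Hence $\{F(V_j^0,U_j)\}_j$ is an iid sequence with CDF $H$, and $\sup_t|\tilde H_m(t)-H(t)|\asto 0$ as $m\to\infty$.

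For the first term, I would use the uniform approximation from \eqref{eq:ulln}: set $\epsilon_n := \sup_{v,u}|\hat F_n(v,u)-F(v,u)|$, which satisfies $\epsilon_n\asto 0$. Then for every $t\in[0,1]$ and every $j$, $|\hat F_n(V_j^0,U_j)-F(V_j^0,U_j)|\leq \epsilon_n$, so
\$
\ind\{F(V_j^0,U_j)\leq t-\epsilon_n\}\;\leq\;\ind\{\hat F_n(V_j^0,U_j)\leq t\}\;\leq\;\ind\{F(V_j^0,U_j)\leq t+\epsilon_n\}.
\$
Averaging over $j$ sandwiches $H_m(t)$ between $\tilde H_m((t-\epsilon_n)^-)$ and $\tilde H_m(t+\epsilon_n)$, so
\$
\sup_t|H_m(t)-\tilde H_m(t)|\;\leq\;\sup_t\big[\tilde H_m(t+\epsilon_n)-\tilde H_m((t-\epsilon_n)^-)\big].
\$
Applying Glivenko-Cantelli one more time replaces the right-hand side, up to an additive term tending to $0$ a.s., by $\sup_t[H(t+\epsilon_n)-H((t-\epsilon_n)^-)]$.

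The main obstacle is to conclude that this last quantity tends to $0$, which requires continuity of $H$ on $[0,1]$ (whence uniform continuity on the compact domain). The key observation here is that, for a fixed value $v$ of $V^0$, the conditional law of $F(v,U)$ with $U\sim\mathrm{Unif}[0,1]$ is uniform on the (possibly degenerate) interval $\bigl[\PP(V(X,Y)<v),\PP(V(X,Y)\leq v)\bigr]$. Thus $H$ is a mixture of such uniform laws, and can have an atom at some $t_0$ only if $V^0=V(X,0)$ falls with positive probability in the set $\{v:\PP(V(X,Y)<v)=\PP(V(X,Y)\leq v)=t_0\}$, i.e., on a level set where the distribution of $V(X,Y)$ has no mass. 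Under the setup of Proposition~\ref{prop:asymptotic} (in particular, the non-plateau condition near $t^*$ which rules out exactly this type of degeneracy), $H$ is continuous; hence uniformly continuous on $[0,1]$, giving $\sup_t[H(t+\epsilon_n)-H((t-\epsilon_n)^-)]\to 0$ as $\epsilon_n\to 0$. Combining the two pieces yields $\sup_t|H_m(t)-H(t)|\asto 0$, as claimed.
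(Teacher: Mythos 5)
Your argument is essentially correct and rests on the same three ingredients as the paper's proof: the uniform convergence \eqref{eq:ulln} of $\hat{F}_n$ to $F$, the Glivenko--Cantelli theorem applied to the i.i.d.\ sequence $F(V_j^0,U_j)$, and continuity of the limiting distribution function $H(t)=\PP(F(V_1^0,U_1)\leq t)$. Where you differ is in the mechanics. The paper fixes a partition $0=t_0<\cdots<t_K=1$ and a tolerance $\delta>0$, bounds the discrepancy at partition points by the indicator of the event $\{\sup_j|\hat{F}_n(V_j^0,U_j)-F(V_j^0,U_j)|\geq\delta\}$ plus the empirical mass of $F(V_j^0,U_j)$ in a $\delta$-neighborhood of the partition points, and then sends $\delta\to 0$ and the mesh to zero. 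You instead absorb the perturbation directly, sandwiching $\ind\{\hat{F}_n(V_j^0,U_j)\leq t\}$ between $\ind\{F(V_j^0,U_j)\leq t-\epsilon_n\}$ and $\ind\{F(V_j^0,U_j)\leq t+\epsilon_n\}$ with the random radius $\epsilon_n=\sup_{v,u}|\hat{F}_n(v,u)-F(v,u)|$. Your route is shorter and avoids the double limit in $\delta$ and the mesh; the price is that you must control $\sup_t\bigl[H(t+\epsilon_n)-H((t-\epsilon_n)^-)\bigr]$ for a random, $n$-dependent radius, which works precisely because a continuous CDF on $\RR$ is automatically uniformly continuous. Both arguments are valid and interchangeable.

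The one place you should be careful is the continuity of $H$. Your atom analysis is in fact more careful than the paper's one-line assertion (the paper simply states that $F(V_j^0,U_j)$ has a continuous distribution because $U_j$ is uniform and independent): you correctly identify that $H$ can have an atom at $t_0$ only if $V(X,0)$ puts positive mass on a set of values $v$ with $\PP(V(X,Y)=v)=0$ and $\PP(V(X,Y)<v)=t_0$. But your claim that the non-plateau condition of Proposition~\ref{prop:asymptotic} rules this out is not correct: that hypothesis only constrains the ratio $t/\PP(F(V(X,0),U)\leq t)$ in a left neighborhood of $t^*$ and says nothing about atoms of $H$ elsewhere in $[0,1]$. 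In effect both you and the paper lean on the implicit assumption that the law of $F(V(X,0),U)$ is atomless; the paper asserts it outright, while you derive a correct criterion for when it fails but then attribute its validity to the wrong hypothesis. (A cosmetic remark: the condition $\sup_{x\in\cX}w(x)\leq M$ in the lemma statement involves a function $w$ that is never defined or used, so neither proof needs it.)
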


\begin{proof}[Proof of Lemma~\ref{lem:tau_conv}]
Let $0=t_0<t_1<\cdots<t_K =1$ be a partition of $[0,1]$.
Then for each $t\in [0,1]$,
there exists some $k$ such that $t_k \leq t < t_{k+1}$, whence
\#
& \frac{1}{m} \sum_{j=1}^m \ind\{\hat{F}_n(V_j^0,U_j)\leq t_{k}\}
- \frac{1}{m} \sum_{j=1}^m \ind\{F(V_j^0,U_j)\leq t_{k+1}\}  \label{eq:asymp_l}\\
&\leq \frac{1}{m} \sum_{j=1}^m \ind\{\hat{F}_n(V_j^0,U_j)\leq t\}
- \frac{1}{m} \sum_{j=1}^m \ind\{F(V_j^0,U_j)\leq t\} \notag  \\
&\leq  \frac{1}{m} \sum_{j=1}^m \ind\{\hat{F}_n(V_j^0,U_j)\leq t_{k+1}\}
- \frac{1}{m} \sum_{j=1}^m \ind\{F(V_j^0,U_j)\leq t_k\}. \notag \\
&\leq  \bigg|  \frac{1}{m} \sum_{j=1}^m \ind\{\hat{F}_n(V_j^0,U_j )\leq t_{k+1}\}
- \frac{1}{m} \sum_{j=1}^m \ind\{F(V_j^0,U_j)\leq t_k\}   \bigg| \notag \\
&\leq \bigg|  \frac{1}{m} \sum_{j=1}^m \ind\{\hat{F}_n(V_j^0,U_j )\leq t_{k+1}\}
- \frac{1}{m} \sum_{j=1}^m \ind\{F(V_j^0,U_j)\leq t_{k+1}\}   \bigg| \notag \\
&\quad + \bigg|  \frac{1}{m} \sum_{j=1}^m \ind\{F(V_j^0,U_j)\leq t_{k+1}\}
- \frac{1}{m} \sum_{j=1}^m \ind\{F(V_j^0,U_j)\leq t_{k}\}   \bigg|.\label{eq_asymp_upper_1}
\#
For any fixed $\delta>0$, we have
\$
& \bigg|  \frac{1}{m} \sum_{j=1}^m \ind\{\hat{F}_n(V_j^0,U_j )\leq t_{k+1}\}
- \frac{1}{m} \sum_{j=1}^m \ind\{F(V_j^0,U_j)\leq t_{k+1}\}   \bigg| \\
& \leq \frac{1}{m} \sum_{j=1}^m \Big( \ind\{\hat{F}_n(V_j^0,U_j )\leq t_{k+1}, F(V_j^0,U_j)> t_{k+1} \} \\
&\qquad +  \ind\{\hat{F}_n(V_j^0,U_j ) > t_{k+1}, F(V_j^0,U_j) \leq t_{k+1} \}  \Big) \\
& \leq \ind\big\{\sup_j \big|\hat{F}_n(V_j^0,U_j ) -F(V_j^0,U_j)\big|\geq \delta \big\}\\
&\qquad + \frac{1}{m} \sum_{j=1}^m   \ind\{\hat{F}_n(V_j^0,U_j )\leq t_{k+1}, t_{k+1}+\delta \geq F(V_j^0,U_j)> t_{k+1}  \}  \\
&\quad
 + \frac{1}{m} \sum_{j=1}^m  \ind\{\hat{F}_n(V_j^0,U_j ) > t_{k+1}, t_{k+1}- \delta < F(V_j^0,U_j) \leq t_{k+1} \} \\
&\leq  \ind\big\{\sup_j \big|\hat{F}_n(V_j^0,U_j ) -F(V_j^0,U_j)\big|\geq \delta \big\}
+ \frac{1}{m} \sum_{j=1}^m   \ind\{ t_{k+1} - \delta \leq F(V_j^0,U_j) \leq t_{k+1} +\delta \} .
\$
Then~\eqref{eq:ulln} implies $\limsup_{n\to \infty} \ind\big\{\sup_j \big|\hat{F}_n(V_j^0,U_j) -F(V_j^0,U_j)\big|\geq \delta \big\} = 0$ almost surely.
Combinining with the decomposition~\eqref{eq_asymp_upper_1}, we have
\$
& \limsup_{n\to \infty} \bigg|  \frac{1}{m} \sum_{j=1}^m \ind\{\hat{F}_n(V_j^0,U_j )\leq t_{k+1}\}
- \frac{1}{m} \sum_{j=1}^m \ind\{F(V_j^0,U_j)\leq t_k\}   \bigg| \\
& \leq  \frac{1}{m} \sum_{j=1}^m   \ind\{ t_{k+1} - \delta \leq F(V_j^0,U_j) \leq t_{k+1} +\delta \}
 +  \frac{1}{m} \sum_{j=1}^m \ind\{t_k < F(V_j^0,U_j)\leq t_{k+1}\}
\$
almost surely.
Again invoking the (uniform) law of large numbers for i.i.d.~random variables $F(V_j^0,U_j)$,
\$
\limsup_{m\to \infty} \sup_k &\bigg|\frac{1}{m} \sum_{j=1}^m   \ind\{ t_{k+1} - \delta \leq F(V_j^0,U_j) \leq t_{k+1} +\delta \}
 - \PP\big(  t_{k+1} - \delta \leq F(V_j^0,U_j) \leq t_{k+1} +\delta  \big) \\
 &\quad +  \frac{1}{m} \sum_{j=1}^m \ind\{t_k < F(V_j^0,U_j)\leq t_{k+1}\} -
 \PP\big( t_k < F(V_j^0,U_j)\leq t_{k+1} \big) \bigg|  = 0
\$
with probability one. We thus have
\#\label{eq:asymp_upp}
& \limsup_{m,n\to \infty} \sup_k \bigg|  \frac{1}{m} \sum_{j=1}^m \ind\{\hat{F}_n(V_j^0,U_j )\leq t_{k+1}\}
- \frac{1}{m} \sum_{j=1}^m \ind\{F(V_j^0,U_j)\leq t_k\}   \bigg| \notag \\
& \leq \sup_{k} \Big|\PP\big(  t_{k+1} - \delta \leq F(V_j^0,U_j) \leq t_{k+1} +\delta  \big) + \PP\big( t_k < F(V_j^0,U_j)\leq t_{k+1} \big) \Big|
\#
for any partition $\{t_k\}$ and $\delta>0$.
On the other hand, we note that since $U_j \sim $ Unif[0,1] is
independent of the observations, $F(V_j^0,U_j)$ are i.i.d.~with continuous distributions.
Letting $\delta \to 0$ and $\{t_k\}$ be fine enough sends the supremum in~\eqref{eq:asymp_upp}
to zero. With similar arguments, we can show a lower bound
for~\eqref{eq:asymp_l} that leads to
\$
\limsup_{m,n\to \infty} \sup_k \bigg| \frac{1}{m} \sum_{j=1}^m \ind\{\hat{F}_n(V_j^0,U_j )\leq t_{k}\}
- \frac{1}{m} \sum_{j=1}^m \ind\{F(V_j^0,U_j)\leq t_{k+1}\} \bigg| = 0
\$
almost surely.
Combining the above two results, we then have
\#\label{eq_asymp_1}
\sup_{t\in[0,1]} \bigg|  \frac{1}{m} \sum_{j=1}^m \ind\{\hat{F}_n(V_j^0,U_j)\leq t\}
- \frac{1}{m} \sum_{j=1}^m \ind\{F(V_j^0,U_j)\leq t\}  \bigg| ~\asto ~ 0.
\#
Invoking the uniform strong law of large numbers, we have
\$
\sup_{t\in[0,1]} \bigg|   \frac{1}{m} \sum_{j=1}^m \ind\{F(V_j^0,U_j)\leq t\} - \PP\big( F(V_j^0, U_j) \leq t\big)  \bigg| ~\asto ~ 0,
\$
hence by the triangular inequality we complete the proof of Lemma~\ref{lem:tau_conv}.
\end{proof}

With similar arguments as in the proof of Lemma~\ref{lem:tau_conv},
we can also show that as $n\to \infty$,
\#\label{eq:unif_h0}
\sup_{t\in[0,1]} \bigg|   \frac{1}{m} \sum_{j=1}^m \ind\{\hat{F}_n(V_j^0,U_j)\leq t, j\in \cH_0\} - \PP\big( F(V_j^0, U_j) \leq t, Y(1)\leq Y(0)\big)  \bigg| ~\asto ~ 0.
\#

Suppose there exists some $t' \in (0,1]$ such that
$
\frac{\PP ( F(V_j^0, U_j) \leq t')}{t'} > \frac{1}{q}.
$
We then define
\$
t^* = \sup\bigg\{ t\in [0,1]\colon  \frac{\PP\big( F(V_j^0, U_j) \leq t \big)}{t }   \geq \frac{1}{q}    \bigg\}
 = \sup\big\{  t\in[0,1]\colon G_\infty(t) \leq q \big\},
\$
where $G_\infty(t) = t/\PP(F(V_j^0,U_j)\leq t)$.
It is well-defined and $t^*\geq t'$.
Fix any $\delta\in (0, t')$. By Lemma~\ref{lem:tau_conv},
\#\label{eq:unif_conv_tau}
\sup_{t\in[\delta,1]} \bigg|   \frac{\sum_{j=1}^m \ind\{F(V_j^0,U_j)\leq t\}}{mt}  - \frac{\PP  ( F(V_j^0, U_j) \leq t )}{t}  \bigg| ~\asto ~ 0.
\#
In particular, $\frac{\sum_{j=1}^m \ind\{F(V_j^0,U_j)\leq t'\}}{mt'} ~\asto~ \frac{\PP ( F(V_j^0, U_j) \leq t' )}{t'} > \frac{1}{q}$,
hence $\hat\tau \geq t' \geq \delta$ eventually. 
Furthermore, since $F(V_j^0,U_j)$ admits a continuous distribution,
the function $t\mapsto \PP\big( F(V_j^0, U_j) \leq t \big)$ is continuous in $t\in [0,1]$.
Under the assumption that for any $\epsilon>0$, there exists
some $|t-t^*|\leq \epsilon$ such that $\frac{\PP ( F(V_j^0, U_j) \leq t)}{t}>1/q$,
the uniform convergence in~\eqref{eq:unif_conv_tau} implies $\hat\tau ~\asto~ t^*$.

Let $\delta \in (0, t^*)$ be any fixed value
such that $\PP ( F(V_j^0, U_j) \leq \delta )>2\epsilon$
for some constant $\epsilon>0$.
Then we know $\inf_{t\in[\delta,1]}\PP\{F(V(X,Y(0)),U)\leq t \}\geq \epsilon$,
and by Lemma~\ref{lem:tau_conv},
\$
\liminf_{m\to\infty} \inf_{t\in[\delta,1]}\bigg\{\frac{1}{m}\sum_{j=1}^m \ind\{\hat{F}_n(V_j^0, U_j ) \leq t\}  \bigg\}\geq \epsilon
\$
almost surely.
Combining this lower boundedness property
with the uniform convergence results in
Lemma~\ref{lem:tau_conv}, equation~\eqref{eq_asymp_1},
and~\eqref{eq:unif_h0}, we know that
\$
\sup_{t\in [\delta,1]} \Bigg|\frac{\sum_{j=1}^m  \ind\{\hat{F}_n(V_j^0, U_j) \leq t, j\in \cH_0\}}{1\vee \sum_{j=1}^m \ind\{\hat{F}_n(V_j^0, U_j ) \leq t\}} - \frac{ \PP \{ F(V(X,Y(0)),U)\leq t, Y(1)\leq Y(0) \}}{\PP\{F(V(X,Y(0)),U)\leq t \} }\Bigg| ~\asto~ 0.
\$
Since $\hat\tau ~\asto ~  t^*$ and the distribution functions are continuous, the asymptotic FDR is
\$
\lim_{m,n\to\infty} \fdr &
= \lim_{m,n\to \infty} \EE\Bigg[\frac{\sum_{j=1}^m  \ind\{\hat{F}_n(V_j^0, U_j) \leq \hat\tau,
Y_{n+j}\leq 0 \}}{1\vee \sum_{j=1}^m \ind\{\hat{F}_n(V_j^0, U_j) \leq \hat\tau\}} \Bigg] \\
&=  \EE\Bigg[\lim_{m,n\to \infty}\frac{\sum_{j=1}^m  \ind\{\hat{F}_n(V_j^0, U_j ) \leq \hat\tau, Y_{n+j}\leq 0 \}}{1\vee \sum_{j=1}^m \ind\{\hat{F}_n(V_j^0, U_j ) \leq \hat\tau\}} \Bigg] \\
&= \frac{ \PP \{ F(V(X,0),U)\leq t^*, Y \leq 0  \}}{\PP\{F(V(X,0),U)\leq t^* \} },
\$
where the second line follows from the Dominated Convergence Theorem.
With similar arguments, we can show that the asymptotic power of the procedure is
\$
\lim_{m,n\to\infty}  \text{Power}  &=
\lim_{m,n\to \infty} \EE\Bigg[\frac{\sum_{j=1}^m  \ind\{\hat{F}_n(V_j^0, U_j )   \leq \hat\tau, Y_{n+j}>0\}}{1\vee \sum_{j=1}^m \ind\{Y_{n+j} >0 \}} \Bigg] \\
&= \frac{ \PP \{ F(V(X,0),U)\leq t^*, Y>0 \}}{\PP\{Y>0 \} }.
\$
Therefore, we complete the proof of Proposition~\ref{prop:asymptotic}.
\end{proof}

\subsection{Proof of Proposition~\ref{prop:variant}}
\label{app:subsec_variant}
\begin{proof}[Proof of Proposition~\ref{prop:variant}]
We show that the FDR conditional on all signs of the data is controlled, i.e.,
\$
\EE\Bigg[  \frac{\sum_{j=1}^m \ind\{Y_{n+j}\leq 0,j\in \cR\} }{1\vee \sum_{j=1}^m R_j} \bigggiven \ind {\{Y_i\leq 0\}}\colon i\in \cD_\calib^- \cup \cD_\test  \Bigg] \leq q.
\$
Following the same arguments as in the proof of Theorem~\ref{thm:exchangeable}, it suffices to show
\$
\sum_{k=1}^m \frac{1}{k} \EE\big [  \ind {\{|\cR_{j\to *}|=k\}} \ind {\{p_j^*\leq q k/m\}} \ind {\{Y_{n+j}\leq 0  \}}   \biggiven \ind {\{Y_i\leq 0\}}\colon i\in \cD_\calib^- \cup \cD_\test  \big] \leq \frac{q}{m},
\$
where $\cR_{j\to *}$ is the rejection set obtained by changing $p_j$ to its oracle counterpart
\$
p_j^* = \frac{\sum_{i\in \cD_\calib^-} \ind {\{V_i < {V}_{n+j} \}}+   ( 1+ \sum_{i\in \cD_\calib^-} \ind {\{V_i =  {V}_{n+j} \}}) \cdot U_j }{n+1}.
\$
Recall that we define $Z_i=(X_i,Y_i)$ for $1\leq i\leq n+m$
and $\tilde{Z}_{n+j} = (X_{n+j},c_j) = (X_{n+j}, 0)$ for $j=1,\dots,m$.
Conditional on all signs,
for any fixed $j$ with $Y_{n+j}\leq 0$,
$\{Z_i\}_{i=1}^n \cup \{\tilde{Z}_{n+\ell}\}_{\ell\neq j} \cup \{Z_{n+j}\}$
are mutually independent, and
$\{Z_i\colon i=1,\dots,n,n+j\}$ are
i.i.d.
The desired result thus follows from the PRDS of conformal p-values,
or equivalently the conditions in Theorem~\ref{thm:exchangeable}.
\end{proof}

\section{Additional  details and results}

\subsection{Detailed setup for the data illustration in Section~\ref{subsec:motivate}}
\label{app:subsec_motivate}

We use the same dataset, same data splitting scheme, and 
the same machine learning model $\hat\mu(\cdot)$ as in Section~\ref{subsubsec:hiv}. 
That is, we repeat the procedure independently for $N=100$ times; 
in each time, we randomly split the whole dataset into training $\cD_\train$, calibration $\cD_\calib$, 
and test data $\cD_\test$ with ratio $6\colon 2 \colon 2$ in size. The model $\hat\mu(\cdot)$ is 
trained on $\cD_\train$; $\cD_\calib$ is used to construct conformal prediction sets; 
the prediction sets on $\cD_\test$ are used to evaluate FDP, miscoverage rate, and proportion 
of $\hat{C}_{1-\alpha}(X)=\{1\}$. 
These quantities are then averaged over $N=100$ runs to estimate the FDR, marginal miscoverage, 
and average proportion of $\hat{C}_{1-\alpha}(X)=\{1\}$. 
In constructing conformal prediction sets, we set $V(x,y)=y-\hat\mu(x)$ 
for all three methods. 
The split conformal prediction set with $(1-\alpha)$ marginal coverage is 
\$
\hat{C}_{1-\alpha}(x) = \{ y\in\{0,1\}\colon V(x,y) \geq \hat\eta\},
\$
where $\hat\eta$ is the $1-(1-\alpha)(1+1/n_{\calib})$-th empirical quantile 
of $\{V(X_{i},Y_i)\colon i\in \cD_\calib\}$. 
The naive approach takes $\cR= \{j \in \cD_\test \colon \hat{C}_{1-\alpha}(X_{j}) = \{1\}\}$. 
Our approach is Algorithm~\ref{alg:bh}.  
Bonferroni correction sets $\cR= \{j \in \cD_\test \colon p_j \leq q/m\}$, 
where $\{p_j\}$ are the conformal p-values constructed in our approach. 
When evaluating our approach and Bonferroni correction (i.e., producing the plot 
on the right panel), we randomly take a subset of $m=1000$ test samples, 
such that conformal prediction at resolution $q/m$ is still feasible. 

\subsection{Data generating processes}
\label{app:subsec_simu_detail}

To better illustrate the data distributions,
Figure~\ref{fig:simu_sample} shows the scatterplots
of data from our eight settings.

\begin{figure}[h]
    \centering
    \includegraphics[width=\linewidth]{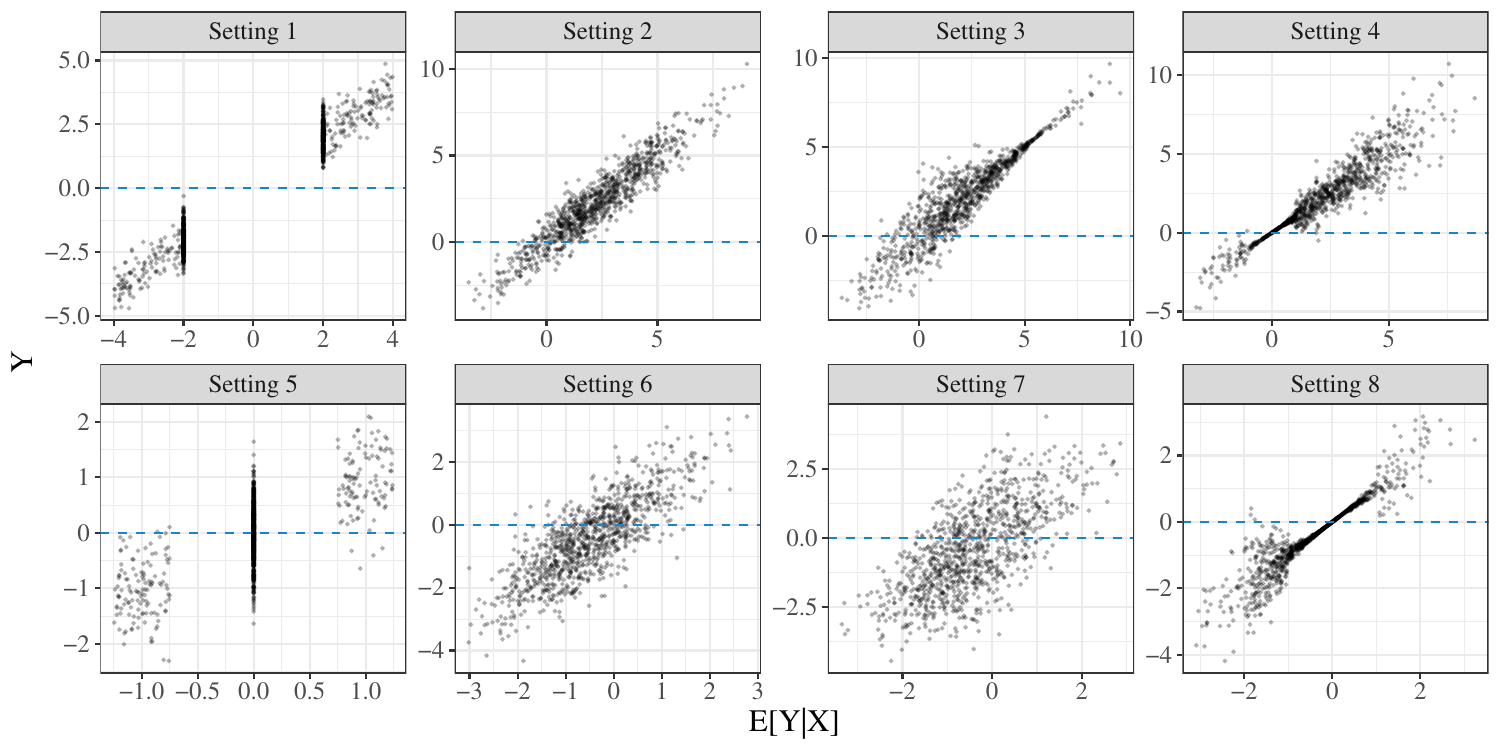}
    \caption{Scatter plots of i.i.d.~samples $\{(X_i,Y_i)\}_{i=1}^{1000}$
    from the data generating processes in our simulations.
    The $x$-axis is the conditional mean function $\mu(X_i)$; the $y$-axis is the
    actual outcome $Y_i$.}
    \label{fig:simu_sample}
  \end{figure}

  In particular,
 we vary the following two aspects
  such that
  the hardness of correctly identifying those $Y>0$
  is not the same.

  \begin{enumerate}
    \item Continuity of the range of $\mu$:
    settings 1 and 5 have disjoint ranges of $\mu(\cX)$ for negative, zero, and positive mean outcomes.
    Among the remainings with continuous ranges of $\mu$, settings 2, 3, 4
    have more samples with positive $\mu(x)$, and
    settings 6, 7, 8 have more samples with negative $\mu(x)$.
    While directional selection seems to be easy
    in setting 1 and 5 (where
    a perfect selection would be to choose those $\mu(X_i)>0$),
    with large noise level it may still be diffcult
    to learn the true model.
    The hardness of prediction
    in continuous-range settings also depends on the absolute scale of the mean functions.

    \item Noise heterogeneity: in settings 1, 2, and 5, 6,
    we set $\epsilon_i \sim N(0,\sigma^2)$ with homogenous variance.
    Other continuous-range settings
    all have heterogenous noise $\epsilon_i\given X_i\sim N(0,\sigma(X_i)^2)$
    for some function $\sigma(\cdot)$.
    To be specific, in settings 3, 4, 8, the variance of noise
    increases with $|\mu(x)|$, showing more difficulty in
    the direction of interest.
    In setting 7, the noise is smaller for
    larger $|\mu(x)|$.
    In general, the task would be easier if
    the mean value is large and the noise variance is small for
    the direction of interest. 
  \end{enumerate}

  The specific configurations of $\mu(x)$ and $\sigma(x)$
  in all of our simulation settings are detailed in Table~\ref{tab:simu} to
  reproduce the results in Section~\ref{sec:simu}.

\begin{table}[h]
\centering
\renewcommand\arraystretch{1.5}
\begin{tabular}{c|c |c }
\hline
  Setting  & $\mu(\cdot)$ & $\sigma(\cdot)$ for $\epsilon_i\given X_i=x \sim N(0,\sigma(x)^2)$ \\
\hline
\multirow{2}{*}{1}  &   $4x_1\ind\{x_2>0\} \cdot \max\{0.5,x_3\}  $ &    \multirow{2}{*}{$ \sigma^2 $}    \\
 & $\quad \quad + 4x_1 \ind\{x_2\leq 0\} \cdot \min\{x_3,-0.5\}$ &  \\
 \hline
2  & $5(x_1x_2+e^{x_4-1})$  &  $ 2.25\sigma^2 $   \\
\hline
3  &  $5(x_1x_2+e^{x_4-1})$ &   $ \sigma\cdot (5.5-|\mu(x)|)/2$   \\
\hline
\multirow{2}{*}{4}  &  \multirow{2}{*}{$5(x_1x_2+e^{x_4-1})$}  &  $\sigma\cdot 0.25 \mu(x)^2 \ind\{|\mu(x)|<2\} $   \\
 & & $\quad\quad+ \sigma\cdot 0.5 |\mu(x)| \ind\{|\mu(x)|\geq 1\}$\\
\hline
\multirow{2}{*}{5}   & $x_1\ind\{x_2>0,x_4>0.5\} \cdot(0.25+x_4)$  &  \multirow{2}{*}{$\sigma$}    \\
 & $\quad\quad  + x_1\ind\{x_2\leq 0,x_4<-0.5\} \cdot (x_4-0.25)$ & \\
\hline
6  &  $2(x_1x_2+x_3^2+e^{x_4-1}-1)$ &  $1.5\sigma$   \\
\hline
7  &  $2(x_1x_2+x_3^2+e^{x_4-1}-1)$ &  $\sigma\cdot(5.5-|\mu(x)|)/2$   \\
\hline
\multirow{2}{*}{8}  & \multirow{2}{*}{$2(x_1x_2+x_3^2+e^{x_4-1}-1)$}  &   $\sigma\cdot(0.25\mu(x)^2 \ind\{|\mu(x)|<2\} $  \\
&&$\quad\quad \quad + 0.5|\mu(x)|\ind\{|\mu(x)|\geq 1\})$\\
\hline
\end{tabular}
\vspace{0.5em}
\caption{Details of the eight data generating processes in the simulations of Section~\ref{sec:simu}. The parameter $\sigma$  corresponds to the noise strength on
the $x$-axis in Figures~\ref{fig:fdr_01}
and~\ref{fig:power_01}. }
\label{tab:simu}
\end{table}

\subsection{Additional plots}
\label{app:subsec_plots}

\begin{figure}[htbp]
  \centering
  \includegraphics[width=\linewidth]{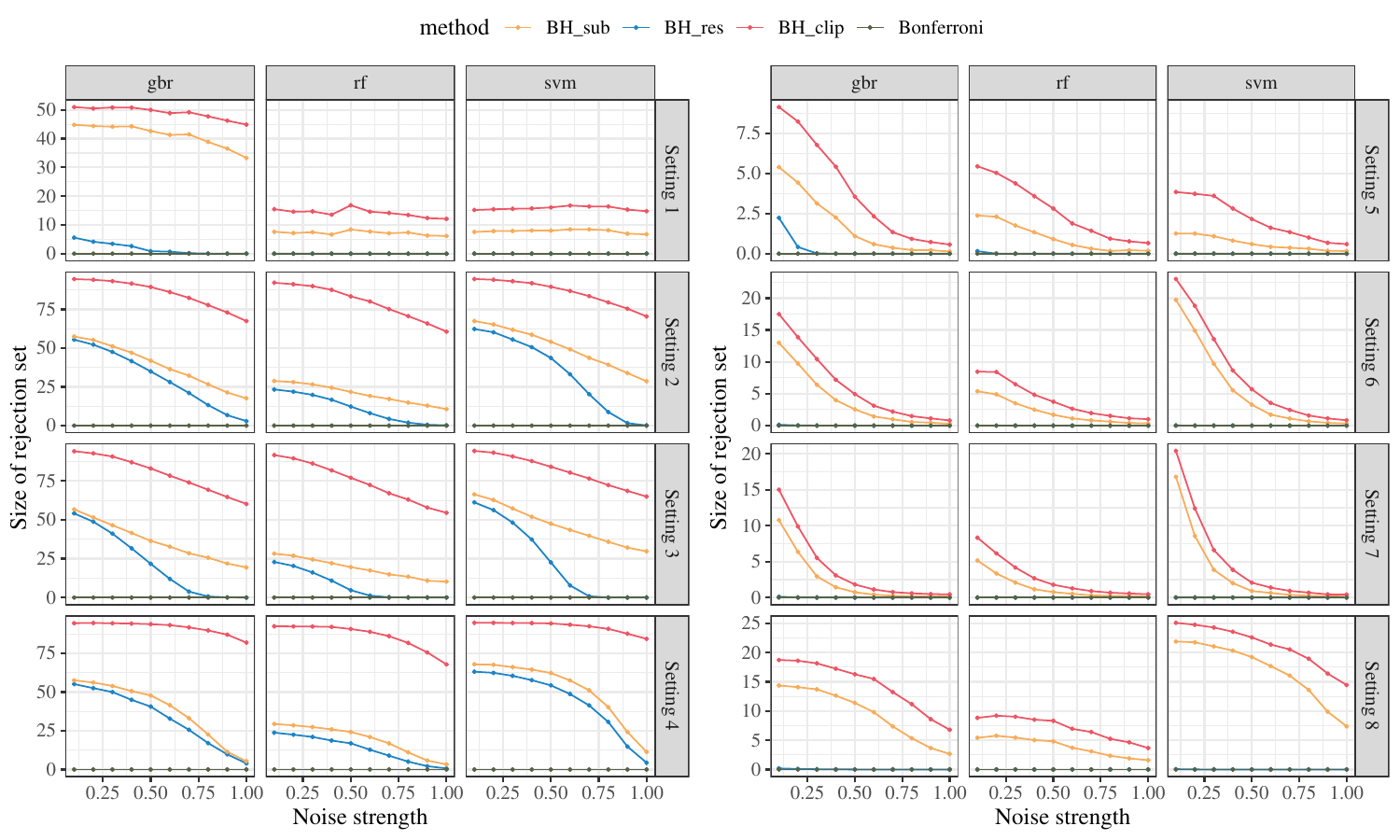}
  \caption{Average size of rejection set for four procedures at
  FDR target $q=0.1$ for various data generating processes.
  Details of the plots are otherwise the same as Figure~\ref{fig:fdr_01}.}
  \label{fig:nrej_01}
\end{figure}

\end{document}